\renewcommand{\vec}[1]{\boldsymbol{#1}}
\newcommand{\dee}{\ensuremath{\textrm{ d}}}
\DeclareMathOperator{\supp}{supp}
\crefname{hypothesis}{Hypothesis}{Hypotheses}
\title{Modeling of electronic dynamics in twisted bilayer graphene
\thanks{Submitted to the editors \today. We are grateful to the two anonymous referees whose suggestions improved this manuscript significantly. \funding{TK's and ML's research was supported in part by Simons Targeted Grant Award No. 896630. DL's, AW's, and ML's research was supported in part by NSF DMREF Award No. 1922165. 
}}}
\author{Tianyu Kong\thanks{Department of Mathematics, University of Minnesota Twin Cities, MN (\email{kong0226@umn.edu})}
\and Diyi Liu\thanks{Department of Mathematics, University of Minnesota Twin Cities, MN (\email{liu00994@umn.edu})}
\and Mitchell Luskin\thanks{Department of Mathematics, University of Minnesota Twin Cities, MN (\email{luskin@umn.edu})}
\and Alexander B. Watson\thanks{Department of Mathematics, University of Minnesota Twin Cities, MN (\email{abwatson@umn.edu})\newline .}
}
\newcommand*{\addFileDependency}[1]{
  \typeout{(#1)}
  \@addtofilelist{#1}
  \IfFileExists{#1}{}{\typeout{No file #1.}}
}
\newcommand*{\myexternaldocument}[1]{%
    \externaldocument{#1}%
    \addFileDependency{#1.tex}%
    \addFileDependency{#1.aux}%
}
\begin{document}

\maketitle

\begin{abstract}
    We consider the problem of numerically computing the quantum dynamics of an electron in twisted bilayer graphene. The challenge is that atomic-scale models of the dynamics are aperiodic for generic twist angles because of the incommensurability of the layers. The Bistritzer-MacDonald PDE model, which is periodic with respect to the bilayer's moir\'e pattern, has recently been shown to rigorously describe these dynamics in a parameter regime.     In this work, we first prove that the dynamics of the tight-binding model of incommensurate twisted bilayer graphene can be approximated by computations on finite domains. The main ingredient of this proof is a speed of propagation estimate proved using Combes-Thomas estimates. We then provide extensive numerical computations which clarify the range of validity of the Bistritzer-MacDonald model.
\end{abstract}




\section{Motivation and summary} \label{sec:mot_and_sum}

In recent years, twisted bilayer graphene and other stackings of 2D materials have emerged as important experimental platforms for realizing quantum many-body phases such as superconductivity \cite{Cao_Fatemi_Demir_Fang_Tomarken_Luo_Sanchez-Yamagishi_Watanabe_Taniguchi_Kaxiras_etal_2018,Cao_Fatemi_Fang_Watanabe_Taniguchi_Kaxiras_Jarillo-Herrero_2018}. These developments were made possible by Bistritzer and MacDonald's observation that the single-particle electronic properties of stackings with relatively small lattice mismatches (for example, layers of the same 2D material with a small twist angle) can often be captured by effective continuum models which are periodic over the stacking's moir\'e pattern \cite{Bistritzer_MacDonald_2011}. This observation meant that, despite 2D materials stackings often being aperiodic at the atomic scale (for example, layers of the same 2D material with an irrational twist angle), their properties could be studied using ordinary Bloch band theory. 

This theoretical simplification allowed, for example, for the identification of bilayer graphene's ``magic'' twist angle, $\theta \approx 1.05^\circ$. Near to this angle, the single-particle continuum model Bloch bands (dispersion relation) at the Fermi level become very flat \cite{Bistritzer_MacDonald_2011}. Based on this observation, Bistritzer and MacDonald predicted that electrons in twisted bilayer graphene interact relatively strongly at this twist angle, resulting in a rich quantum many-body phase diagram. This prediction was dramatically verified in the experiments \cite{Cao_Fatemi_Demir_Fang_Tomarken_Luo_Sanchez-Yamagishi_Watanabe_Taniguchi_Kaxiras_etal_2018,Cao_Fatemi_Fang_Watanabe_Taniguchi_Kaxiras_Jarillo-Herrero_2018}.

The importance of the theoretical simplification provided by effective continuum models motivates the question of their range of validity. This question was recently considered in detail by three of the authors of this work for the special case of the Bistritzer-MacDonald continuum model of twisted bilayer graphene \cite{Watson_Kong_MacDonald_Luskin_2023}. They considered an atomic-scale tight-binding Schr\"odinger model governing the dynamics of the wave-function of a single electron in twisted bilayer graphene, in the absence of mechanical relaxation, with wave-packet initial data spectrally concentrated at the monolayer Dirac points. Then, they estimated the difference at time $t > 0$, in the natural $\ell^2$ norm, between the wave-packet time-evolved according to the tight-binding model $\psi_{\text{TB}}(t)$, and the same wave-packet time-evolved according to the Bistritzer-MacDonald model $\psi_{\text{BM}}(t)$. 

The main result of \cite{Watson_Kong_MacDonald_Luskin_2023} can be summarized simply as
\begin{equation} \label{eq:main_estimate}
    \| \psi_{\text{TB}}(t) - \psi_{\text{BM}}(t) \|_{\ell^2} \leq \rho\left(\frac{\theta}{\epsilon}\right) \times \left( \epsilon^2 + \epsilon \theta + \epsilon \mathfrak{h}^{1-c_1} + \mathfrak{h}^{2-c_2} \right) \times t.
\end{equation}
Here, $\epsilon, \theta$, and $\mathfrak h$ are dimensionless parameters, and $\rho(\xi)$ denotes a positive continuous function which tends to $\infty$ as $\xi \rightarrow \infty$, and converges to a constant as $\xi \rightarrow 0$. In particular, $\rho\left(\frac{\theta}{\epsilon}\right)$ can be uniformly bounded as long as $\frac{\theta}{\epsilon}$ remains bounded. The parameter $\epsilon$ denotes the spectral width of the wave-packet in momentum space normalized by the monolayer lattice constant, $\theta$ the twist angle in radians, and $\mathfrak h$ the ratio of the largest interlayer hopping energy in momentum space to the largest intralayer hopping energy in real space. For realistic choices of the interlayer hopping function, the constants $c_1, c_2 > 0$ can be taken arbitarily small. It follows immediately from \cref{eq:main_estimate} that
\begin{equation} \label{eq:parameter_regime}
    \theta \lesssim \epsilon \text{ and } \mathfrak h \sim \epsilon \implies \| \psi_{\text{TB}}(t) - \psi_{\text{BM}}(t) \|_{\ell^2} \leq C \epsilon^{2 - c} t,
\end{equation}
where $C, c > 0$ are constants independent of $\epsilon$ and $t$, and $c$ can be taken arbitarily small. Hence, in parameter regime \cref{eq:parameter_regime}, the Bistritzer-MacDonald model captures the dynamics of the tight-binding model up to times $\sim \epsilon^{- (2 - \delta)}$ for any $\delta > 0$. 

It is natural to ask whether this regime is realized in experiments. The magic angle corresponds to $\theta \approx 0.017$ radians, while the value of $\mathfrak{h}$ is estimated as $\approx 0.042$. It follows that the regime \cref{eq:parameter_regime} is indeed realized, for a non-trivial range of $\epsilon$, at the magic angle. It should be emphasized that rigorous justification of any moir\'e-scale model for the many-body electronic properties of twisted bilayer graphene is a challenging open problem, although a formal plausibility argument for such reductions is provided in \cite{Watson_Kong_MacDonald_Luskin_2023}. The arguments provided can partially justify the single-particle Hamiltonian term in an interacting Bistritzer-MacDonald model of TBG in \cite{faulstich2022interacting}.

It is currently unclear whether estimate \cref{eq:main_estimate}, proved in \cite{Watson_Kong_MacDonald_Luskin_2023}, is sharp. In particular, the following questions regarding the convergence of the tight-binding dynamics to those of the Bistritzer-MacDonald model were not answered by \cite{Watson_Kong_MacDonald_Luskin_2023}:
\begin{enumerate}[label={(\arabic*)}]
    \item How well does the Bistritzer-MacDonald approximation perform as a practical matter, both in the regime \cref{eq:parameter_regime} and otherwise? The point here is that the error in \cref{eq:parameter_regime} could be large in practice, even in the regime \cref{eq:parameter_regime}, if the constant $C$ is large.
    \item Suppose we start in the regime \cref{eq:parameter_regime}, and then increase the parameters $\epsilon, \theta$, and $\mathfrak h$ individually. Does \cref{eq:main_estimate} capture the correct dependence of the error on each parameter?
    \item Is \cref{eq:parameter_regime} the only regime where the Bistritzer-MacDonald model captures the dynamics of the tight-binding model? In other words, outside of the regime \cref{eq:parameter_regime}, is the error always large?
\end{enumerate}
The focus of the present work is to begin to address questions (1)-(3) by accurate numerical computation of time-evolved wave-packet solutions of the tight-binding model of twisted bilayer graphene.

Our numerical experiments provide the following (roughly stated) answers to these questions:
\begin{enumerate}[label={(\arabic*)}]
    \item In the regime \cref{eq:parameter_regime}, the Bistritzer-MacDonald approximation does indeed capture critical features of the tight-binding dynamics. For example, we find that the band structure of the Bistritzer-MacDonald model does predict the group velocity of spectrally concentrated wave-packet solutions of the tight-binding model; see Figures \ref{fig:BM_band_structure} and \ref{fig:TB_BM_error}. In particular, at the magic angle, the group velocity is essentially zero (Figure \ref{fig:BM_flat_band}). A plot of the approximation error as a function of $\epsilon$, with $\theta$ and $\mathfrak{h}$ scaled according to \eqref{eq:parameter_regime}, is provided in Figure \ref{fig:BM_epsilon_error}.
    \item The estimate \cref{eq:main_estimate} generally does capture the correct scaling of the error as a function of each parameter; see Figure \ref{fig:BM_param_error}. The exception is the dependence of the error on $\theta$, where we find that, for large $\theta$, the error is much smaller than predicted by the estimate. Instead of growing, the error remains small with essentially constant size as $\theta$ is increased.
    \item We do not exhaustively investigate all possible parameter regimes, but Figure \ref{fig:BM_param_error} suggests that the Bistritzer-MacDonald approximation remains accurate even for relatively large $\theta$, as long as $\epsilon$ and $\mathfrak{h}$ are small.
\end{enumerate}
We discuss the details of our numerical experiments in Section \ref{sec:num_trunc}.

An alternative justification of the Bistritzer-MacDonald model has been provided in \cite{Cances_Garrigue_Gontier_2023} (see also \cite{cances2021secondorder}). The starting point of their work is a continuum Kohn-Sham DFT description of the twisted bilayer. They show that it is possible to pass to a moir\'e-periodic continuum model, all of whose parameters can be numerically computed via DFT applied to untwisted layers, under fairly general assumptions. That moir\'e-periodic continuum model has additional terms compared with the model originally proposed by Bistritzer and MacDonald in \cite{Bistritzer_MacDonald_2011}, but numerical computations of these terms at realistic model parameters (in particular, at realistic values of the twist angle and interlayer distance) find that these terms are small \cite{Cances_Garrigue_Gontier_2023}.

The accurate numerical computation of time evolved solutions of the tight-binding model of twisted bilayer graphene is made challenging by the fact that the model is infinite dimensional (the Hilbert space is isomorphic to $\ell^2(\mathbb{Z}^2)$) and aperiodic at generic twist angles. A standard approach to obtaining a finite dimensional model for computation is to approximate the twist angle by a rational angle, so that the system can be treated as periodic. Such approaches are known as supercell approximations \cite{Lin_Lu_2019}. 

An alternative approach is to leave the twist angle fixed, but truncate the computational domain (i.e., impose a Dirichlet boundary condition), far from the support of the initial data. We follow the second approach in the present work, because with this approach we can rigorously estimate the difference between the dynamics of the truncated model and those of the untruncated model at any twist angle of interest. Similar ideas have been used for numerical computation of dynamics with error control \cite{Colbrook_Horning_Thicke_Watson_2021, Colbrook_semigroup}, although in those works the truncation distance is chosen adaptively, while we give an {\em a priori} estimate. 

The main idea of the estimate is a Lieb-Robinson bound\cite{LiebRobinson1972}, i.e., a bound on the speed of propagation for solutions of the tight-binding Schr\"odinger equation (up to error which is exponentially small in the distance). To keep our work self-consistent, we give a straightforward proof of the Lieb-Robinson bound we require using Combes-Thomas estimates \cite{1973CombesThomas}. The study of Lieb-Robinson bounds for quantum many-body systems remains an active area; see \cite{Chen_Lucas_Yin_2023, Hastings_2012} and references therein.

Note that computing spectrally-concentrated wave-packet solutions of the truncated system is still difficult, because such solutions spread over the moir\'e cell (length $\propto \theta^{-1}$), necessitating large domain truncations. A layer-splitting numerical scheme was recently proposed to compute dynamics of incommensurate heterostructures in \cite{Wang_Chen_Zhou_Zhou_2021}. This work built on related work applying plane wave decomposition to compute other properties of such heterostructures in \cite{ZHOU201999,Wang2023,Massatt_Carr_Luskin_2021,momentumspace17}. We aim to combine these ideas with those of the present work to obtain an efficient numerical method with rigorous error estimates in future work.

\subsection{Description of results}

We now briefly describe the results of this work. We first describe the analytical results, which prove convergence of our tight-binding numerical computations on finite computational domains to solutions of the model without truncation. We will then describe our computational results.

Our first analytical result, \cref{thm:combes-thomas}, is a Combes-Thomas estimate on decay of the matrix elements of the resolvent of the tight-binding Hamiltonian. We then use this estimate to prove a Lieb-Robinson bound on the speed of propagation in \cref{prop:speed_of_propagation}. This bound allows us to prove convergence, at fixed time $t$, of solutions of the truncated tight-binding model to those of the untruncated model, up to exponentially small error in the truncation length, in \cref{thm:trunc_main}. We confirm the exponentially fast convergence of our truncated domain computations as the size of the truncation is increased computationally in \cref{fig:trunc_R}.

We now describe the results of our numerical comparisons between tight-binding dynamics and those generated by the Bistritzer-MacDonald model. In \cref{fig:TB_BM_error}, we compare these dynamics directly, for initial conditions spectrally concentrated in higher (not flat) Bloch bands of the Bistritzer-MacDonald model, so that the wave-packet has a clear non-zero group velocity (we show the band structure of the Bistritzer-MacDonald model in \cref{fig:BM_band_structure}). The results confirm that the continuum model accurately captures the most obvious features of the tight-binding dynamics, although clear errors can be seen even for relatively small times. In \cref{fig:BM_flat_band}, we repeat the same experiments but for wave-packets concentrated in the flat bands of the Bistritzer-MacDonald model. We find that the group velocity of wave-packets is negligible, as is to be expected, but also that the Bistritzer-MacDonald model misses interesting features of the tight-binding solution. Specifically, the Bistritzer-MacDonald model appears to miss a twist-angle-dependent chirality of the solution (see Figure 
\ref{fig:BM_flat_band} and caption).  

In \cref{fig:BM_epsilon_error}, we confirm that, for sufficiently small values of the parameters scaled according to \cref{eq:parameter_regime}, and sufficiently small $t$, the form of the error is indeed $C \epsilon^2 t$. In \cref{fig:BM_param_error}, we start in the regime \cref{eq:parameter_regime}, and then vary each of the parameters $\epsilon, \mathfrak h, \theta$ individually. In the case of $\mathfrak h$, we confirm that the error grows linearly, and in the case of $\epsilon$, we confirm that the error grows between linearly and quadratically. Our most interesting result is in the case of $\theta$, where we observe that the error is essentially constant as $\theta$ as increased, suggesting a wider range of applicability of the Bistritzer-MacDonald model than could be expected from the results of \cite{Watson_Kong_MacDonald_Luskin_2023}. We aim to provide an analytical explanation of this phenomenon in future work. 



\subsection{Structure of this paper}

The structure of the remainder of our paper is as follows. We first introduce the lattice structure of  monolayer and twisted bilayer graphene in \cref{sec:tbg}.  We then define the tight-binding Hamiltonian and its finite dimensional approximation through domain truncation in \cref{sec:tight_binding}, and present our estimate on the truncation error (\cref{thm:combes-thomas}, \cref{prop:speed_of_propagation}, and \cref{thm:trunc_main}). In \cref{sec:BM}, we review the continuum approximation of TBG, the Bistritzer-MacDonald model, and recall the main result of \cite{Watson_Kong_MacDonald_Luskin_2023} on the parameter regime in which the approximation error can be estimated (\cref{thm:BM}).

We present several numerical results to validate the truncation error of the tight-binding model in \cref{sec:num_trunc}. We then present our results directly comparing the dynamics of the Bistrizer-MacDonald model and of the tight-binding model across various initial conditions in \cref{sec:num_BM}. Finally in \cref{sec:num_BM_beyond} we numerically compute the sensitivity of the error as a function of the model parameters. The proofs and the technical details for this paper are presented in the Appendices.

\subsection{Code availability}

We have made the code used to generate our numerical results available at \verb|github.com/timkong98/dynamics_tbg|. 

\section{Quantum dynamics of twisted bilayer graphene}


In this section, we recall the tight-binding model of twisted bilayer graphene studied in \cite{Watson_Kong_MacDonald_Luskin_2023}. 

\subsection{Twisted bilayer graphene}\label{sec:tbg}

Graphene is a single sheet of carbon atoms arranged in a honeycomb structure. 
Each unit cell contains two atoms, and the unit cells form a Bravais lattice with vectors
\begin{equation}
    \vec a_1 := \frac{a}{2}(1,\sqrt{3})^\top, \quad \vec a_2 := \frac{a}{2}(-1,\sqrt{3})^\top \quad A := (\vec a_1, \vec a_2),
\end{equation}
where $a$ is the lattice constant. The physical value of the graphene lattice constant is approximately $a \approx 2.5 \text{ \AA}$. The graphene Bravais lattice $\mathcal{R}$ and a unit cell $\Gamma$ can be defined as
\begin{equation}
    \mathcal{R} := \{ \vec R = A\vec n: \vec n\in \mathbb{Z}^2 \}, \quad 
    \Gamma=\{ A \alpha : \alpha \in [0,1)^{2} \}.
\end{equation}
Within a unit cell indexed by $\vec R$, there are two atoms at physical location $\vec R + \vec \tau^A$ and $\vec R + \vec\tau^B$, which we define as
\begin{equation} \label{eq:relative_shift}
\vec\tau^A := (0,0)^\top, \quad \vec\tau^B : = \left(0, \delta \right)^\top, \quad \delta:= \frac{a}{\sqrt{3}}.
\end{equation}
These atoms are in sub-lattices $A$ and $B$ respectively, and the relative shift between two sub-lattices $\delta$ is the minimum distance between two atoms in the same layer.

The reciprocal lattice vectors are defined through the relation $\vec a_i \cdot \vec b_j = 2\pi\delta_{ij}$ for $\delta_{ij}$ the Kronecker delta and $i, j\in\{1,2\}$. Explicitly they are
\begin{equation}
    \vec{b}_1 := \frac{4 \pi}{3 \delta} \left( \frac{\sqrt{3}}{2},\frac{1}{2} \right)^\top, 
    \quad \vec{b}_2 := \frac{4 \pi}{3 \delta} \left( - \frac{ \sqrt{3} }{2},\frac{1}{2} \right)^\top, 
    \quad B := ( \vec{b}_1, \vec{b}_2 ).
\end{equation}
Similarly we define the reciprocal lattice $\Lambda^*$ and a fundamental cell $\Gamma^*$ by
\begin{equation}
    \Lambda^* := \left\{ \vec{G} = B \vec{n} : \vec{n} \in \mathbb{Z}^2 \right\}, \quad \Gamma^* := \left\{ B \beta : \beta \in \left[0,1\right)^2 \right\}.
\end{equation}
The Dirac points of graphene are 
\begin{equation}
    \vec{K} := \frac{ 4 \pi }{ 3 a } ( 1, 0 )^\top, \quad \vec{K}' := - \vec{K}.
\end{equation}

Twisted bilayer graphene (TBG) consists of two monolayer graphene in parallel planes with a relative twist angle, and separated by an interlayer distance $L$. In particular for two layers of rigid graphene, each layer can be described by a rotated Bravais lattice. Let $R(\eta)$ be the matrix that describes a counter-clockwise rotation by $\eta$ around the origin,
\begin{equation}
        R(\eta) := \begin{pmatrix} 
        \cos \eta & - \sin \eta \\ 
        \sin \eta &  \cos \eta 
        \end{pmatrix}.
\end{equation}
Then for any twist angle $\theta > 0$, we can define the lattice vectors of TBG by
\begin{equation}
    \vec a_{1,i} := R\left(-\frac{\theta}{2}\right)\vec a_i, \quad
   \vec a_{2,i} := R\left(\frac{\theta}{2}\right)\vec a_i, \quad
    A_j := (\vec a_{j,1}, \vec a_{j,2}), \quad i \in \{1,2\}, \, j\in\{1,2\}.
\end{equation}
Here $j$ describes the layer, and $i$ describes the lattice vector in each layer. The relative shift between sublattices are
\begin{equation}
\vec\tau^\sigma_1 := R\left(-\frac{\theta}{2}\right) \vec\tau^\sigma, \quad \vec\tau^\sigma_2 := R\left(\frac{\theta}{2}\right) \vec\tau^\sigma, \quad 
\sigma \in \{A, B\},
\end{equation}
and the lattices are  
\begin{equation}
    \mathcal{R}_j := \left\{\vec R_j = A_j \vec n,\, \vec n\in \mathbb{Z}^2 \right\}, \quad j\in\{1,2\}.
\end{equation} 
Similarly, the reciprocal lattice vectors of TBGs are
\begin{equation}
    \vec{b}_{1,i} := R\left(- \frac{\theta}{2}\right) \vec{b}_i, \quad \vec{b}_{2,i} := R\left(\frac{\theta}{2}\right) \vec{b}_i, \quad B_j := (\vec{b}_{j,1}, \vec{b}_{j,2}) \quad i \in \{1,2\}, \, j\in\{1,2\}.
\end{equation}
The $\vec K$ and $\vec K'$ points of each layer are
\begin{equation}
    \vec{K}_1 := R\left(-\frac{\theta}{2}\right) \vec{K}, \quad \vec{K}_2 := R\left(\frac{\theta}{2}\right) \vec{K}, \quad \vec{K}_i' := - \vec{K}_i, \quad i \in \{1,2\}.
\end{equation}



For each layer, the lattice $\mathcal{R}_j$ is a rotated monolayer Bravais lattice, thus also periodic. For general twist angle $\theta$, the periodicity is broken in the bilayer system $\mathcal{R}_1 \cup \mathcal{R}_2$.  
Even though TBG is not exactly periodic, there is an approximate periodicity known as the moir\'e pattern (see \cref{fig:moire_potential}). The moir\'e reciprocal lattice vectors are given by  the difference of reciprocal lattice vectors between layers \cite{Cazeaux_Luskin_Massatt_2020,relaxfosdick22}
\begin{equation}
    \vec{b}_{m,1} := \vec{b}_{1,1} - \vec{b}_{2,1}, \quad \vec{b}_{m,2} := \vec{b}_{1,2} - \vec{b}_{2,2}.
\end{equation}
These vectors can be computed explicitly. Let $|\Delta \vec K| := | \vec K_1  - \vec K_2|= 2 | \vec{K} | \sin\left( \frac{\theta}{2} \right)$ be the distance between the Dirac points of the layers. Then, we have
\begin{equation} 
    \vec{b}_{m,1} = \sqrt{3} | \Delta \vec{K} | \left( \frac{1}{2}, -\frac{\sqrt{3}}{2} \right)^\top, \quad \vec{b}_{m,2} = \sqrt{3} | \Delta \vec{K} | \left( \frac{1}{2} , \frac{\sqrt{3}}{2} \right)^\top.
\end{equation}
The moir\'e lattice vectors are defined through the relation  $\vec a_{m,i} \cdot \vec b_{m.j} = 2\pi\delta_{ij}$ for $i, j\in\{1,2\}$,
\begin{equation}
    \vec{a}_{m,1} := \frac{4 \pi}{3 | \Delta \vec{K} |} \left( \frac{\sqrt{3}}{2}, -\frac{1}{2} \right)^\top, \quad \vec{a}_{m,2} := \frac{4 \pi}{3 | \Delta \vec{K} |} \left( \frac{\sqrt{3}}{2} , \frac{1}{2} \right)^\top.
\end{equation}
Defining $A_m := (\vec a_{m,1}, \vec a_{m,2})^\top$, and  $B_m := (\vec b_{m,1}, \vec b_{m,2})^\top$, the moir\'e lattice, unit cell, reciprocal lattice, and reciprocal unit cell are analogous to their monolayer counterparts
\begin{equation}
\begin{gathered}
    \Lambda_m := \{ \vec{R}_m = A_m \vec{m} : \vec{m} \in \mathbb{Z}^2 \}, \quad \Gamma_m := \left\{ A_m \alpha : \alpha \in \left[ 0 , 1 \right)^2 \right\},\\
     \Lambda_m^* := \{ \vec{G}_m = B_m \vec{n} : \vec{n} \in \mathbb{Z}^2 \}, \quad \Gamma_m^* := \left\{ B_m \beta : \beta \in \left[ 0 , 1 \right)^2 \right\}.
\end{gathered}
\end{equation}

When the twist angle $\theta$ is small, the length of the moir\'e lattice vectors $| \vec a_{m,i}|$ is proportional to $a\theta^{-1}$, significantly longer than those of monolayer graphene. To model the physical phenomenon on a moir\'e scale for small $\theta$, we will need to consider including at least thousands of atoms in the model. This gives a rough estimate of the truncation radius when studying electronic dynamics in TBG. 



\subsection{The tight-binding Hamiltonian}\label{sec:tight_binding}

In this section we introduce a natural tight-binding Hamiltonian \cite{Kaxiras_Joannopoulos_2019, Lin_Lu_2019, Ashcroft_Mermin_1976} for an electron in TBG. This model is natural in the sense that it trades off some accuracy for considerable conceptual and computational simplification compared with fundamental continuum PDE Schr\"odinger equation models. Tight-binding models arise as Galerkin approximations to continuum PDE models; when parametrized using careful DFT computations, tight-binding models of twisted heterostructures derived using Wannier basis orbitals have comparable accuracy to large-scale DFT computations at fixed commensurate twist angles \cite{PhysRevB.93.235153,Carr_Fang_Kaxiras_2020}. Rigorous derivations of such models were provided in \cite{1988HelfferSjostrand,2017FeffermanLeeThorpWeinstein}.


First we precisely define the Hamiltonian and wave functions in these systems.
Let $\mathcal{A}_j = \{A, B\}$ denote the set of indices of orbitals associated with each unit cell in layer $j$. Then the full degree of freedom space of TBG can be described using an index set
\begin{equation}
    \Omega := (\mathcal{R}_1\times \mathcal{A}_1) \cup (\mathcal{R}_2\times \mathcal{A}_2).
\end{equation}
For the atom indexed by $\vec R_i\sigma \in \Omega$, where $i\in \{1,2\}$ denotes the layer, and $\sigma \in \{A, B\}$ denotes the sublattice, the physical location is $\vec R_i + \vec\tau_i^\sigma \in \mathbb{R}^2$. Note that assuming a constant interlayer distance allows us to model TBG by a 2D model, while modeling the effect of non-zero interlayer distance through the interlayer hopping function.


We model the wave function of an electron in TBG as an element of the Hilbert space
\begin{equation}
\begin{gathered}
    \mathcal{H}  := \ell^2(\Omega) = \left\{ (\psi_{\vec R_i\sigma})_{ \vec R_i\sigma \in \Omega } : 
    \| \psi \|_\mathcal{H} < \infty\right\}, \\
    \| \psi \|_\mathcal{H} =  \left(\sum_{i \in\{1,2\}} \sum_{\sigma \in\{A,B\}} \sum_{\vec R_i\in \mathcal{R}_i}| \psi_{\vec R_i\sigma}|^2\right)^\frac{1}{2}.
    \end{gathered}
    \end{equation}
   For ease of notation, we write the three summations as a single summation over elements of the index set. The square of the modulus of $\psi_{\vec R_i\sigma}$ represents the electron density on the orbital of sublattice $\sigma \in\{A, B\}$ in the $\vec R$th cell of layer $i \in\{1,2\}$. 

We define the TBG tight-binding Hamiltonian $H: \mathcal{H} \to \mathcal{H}$ to be a linear self-adjoint operator that acts on the wave functions as
\begin{equation} \label{eq:H_tb}
    (H\psi)_{\vec R_i\sigma} = \sum_{\vec R_j\sigma' \in \Omega} H_{\vec R_i\sigma, \vec R'_j\sigma'}\psi_{\vec R_j'\sigma'}, 
\; H_{\vec R_i\sigma, \vec R'_j\sigma'} = \overline{H_{\vec R'_j\sigma', \vec R_i\sigma}}.
\end{equation}
 We make the following exponential decay assumption to ensure the Hamiltonian is localized. Exponential decay is natural since the tight-binding model is defined through exponentially localized Wannier functions.
\begin{assumption}[Exponential decay hopping]
\label{as:decay}
There exist constants $h_0, \alpha_0 > 0$ such that 
\begin{equation}
|H_{\vec R_i\sigma, \vec R'_j\sigma'}| \leq h_0 e^{-\alpha_0 \left|\vec R_i + \vec \tau_i^\sigma - \vec R'_j - \vec\tau_j^{\sigma'}\right|}.
\end{equation}
\end{assumption}

\begin{lemma}
\label{lem:bounded}
Under \cref{as:decay}, $H$ is a bounded self-adjoint operator. Its operator norm is bounded by
\begin{equation}
    \|H\| \leq \frac{8\pi h_0 e^{\delta\alpha_0}  }{|\Gamma| \alpha_0^2},
\end{equation}
where $\delta = a/\sqrt{3}$, and $|\Gamma| = \sqrt{3}a^2/2$ is the area of a unit cell.
As a consequence, the spectrum $\sigma(H)$ is contained in $ [-\|H\|, \|H\|] \subset \mathbb R$.
\end{lemma}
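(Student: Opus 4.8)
The plan is to establish the operator-norm bound by a Schur test and then read off self-adjointness and the spectral containment as consequences. Since \cref{as:decay} gives a uniform pointwise bound on $|H_{\vec R_i\sigma, \vec R'_j\sigma'}|$, and since the Hermitian symmetry assumed in \cref{eq:H_tb} forces the row and column absolute sums to coincide, it suffices to control a single row sum: once
\begin{equation}
  \sum_{\vec R'_j\sigma' \in \Omega} \left| H_{\vec R_i\sigma, \vec R'_j\sigma'} \right| \le h_0 \sum_{j \in \{1,2\}} \sum_{\sigma' \in \{A,B\}} \sum_{\vec R'_j \in \mathcal{R}_j} e^{-\alpha_0 \left| \vec x - (\vec R'_j + \vec\tau_j^{\sigma'}) \right|}
\end{equation}
(where $\vec x := \vec R_i + \vec\tau_i^\sigma$) is bounded uniformly in the base site $\vec x$, the Schur test gives $\|H\|$ bounded by that same quantity, and in particular shows $H$ maps $\mathcal{H}$ into $\mathcal{H}$ boundedly.

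The heart of the argument is a lattice-sum-to-integral comparison for each of the four channels $(j,\sigma')$. For fixed $(j,\sigma')$ the points $\vec y = \vec R'_j + \vec\tau_j^{\sigma'}$ form a translate of the triangular Bravais lattice $\mathcal{R}_j$, whose Voronoi (Wigner--Seitz) cells are congruent regular hexagons of area $|\Gamma|$ tiling $\mathbb{R}^2$. I would attach to each site $\vec y$ its cell $C_{\vec y}$ and write $e^{-\alpha_0|\vec x - \vec y|} = |\Gamma|^{-1}\int_{C_{\vec y}} e^{-\alpha_0|\vec x - \vec y|}\,\dee\vec z$. The crucial geometric fact is that the circumradius of this hexagon equals $a/\sqrt 3 = \delta$, so every $\vec z \in C_{\vec y}$ obeys $|\vec z - \vec y| \le \delta$; the triangle inequality then gives $|\vec x - \vec y| \ge |\vec x - \vec z| - \delta$, hence $e^{-\alpha_0|\vec x - \vec y|} \le e^{\alpha_0\delta}\, e^{-\alpha_0|\vec x - \vec z|}$. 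Summing over $\vec y$ and using that the cells tile the plane converts the sum into an integral,
\begin{equation}
  \sum_{\vec R'_j \in \mathcal{R}_j} e^{-\alpha_0 \left| \vec x - (\vec R'_j + \vec\tau_j^{\sigma'}) \right|} \le \frac{e^{\alpha_0 \delta}}{|\Gamma|} \int_{\mathbb{R}^2} e^{-\alpha_0 |\vec x - \vec z|}\, \dee\vec z = \frac{e^{\alpha_0\delta}}{|\Gamma|}\cdot \frac{2\pi}{\alpha_0^2},
\end{equation}
the last equality by passing to polar coordinates and using $\int_0^\infty r\, e^{-\alpha_0 r}\,\dee r = \alpha_0^{-2}$.

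Summing over the four pairs $(j,\sigma') \in \{1,2\}\times\{A,B\}$ contributes a factor $4$, so the row sum is at most $h_0 \cdot 4 \cdot 2\pi\, e^{\alpha_0\delta}/(|\Gamma|\alpha_0^2) = 8\pi h_0 e^{\delta\alpha_0}/(|\Gamma|\alpha_0^2)$, which is exactly the stated bound on $\|H\|$. Boundedness together with the assumed Hermitian symmetry of the matrix elements then identifies $H$ as a bounded self-adjoint operator, and the containment $\sigma(H) \subset [-\|H\|,\|H\|]$ is the standard spectral fact for bounded self-adjoint operators.

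I expect the only genuine obstacle to be the geometric comparison, and specifically the clean verification that the Voronoi circumradius of $\mathcal{R}_j$ is precisely $\delta$. This is what lets the translation by $\vec\tau_j^{\sigma'}$ and the in-cell displacement be absorbed simultaneously into the single factor $e^{\alpha_0\delta}$; pinning down the sharp area normalization $|\Gamma|$ and the correct channel multiplicity (four layer--sublattice pairs) is what produces the stated constant, and any looseness there would weaken the bound.
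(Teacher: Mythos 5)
Your proposal is correct and follows essentially the same route as the paper: both reduce the problem to bounding the uniform row sum $\sup_{x}\sum_{y}|H_{xy}|$ and control it by the identical Wigner--Seitz-cell comparison (circumradius $\delta=a/\sqrt{3}$, cell area $|\Gamma|$, four layer--sublattice channels, and $\int_{\mathbb{R}^2}e^{-\alpha_0|\vec z|}\,\dee\vec z=2\pi/\alpha_0^2$), arriving at the same constant. The only cosmetic difference is that you pass from the row-sum bound to the $\ell^2$ operator norm via the Schur test, whereas the paper interpolates between $\|H\|_\infty$ and $\|H\|_1$ by Riesz--Thorin; for a Hermitian kernel these are equivalent.
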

\begin{proof}
See \cref{prf:h_bound}.
\end{proof}

\begin{example}
\label{ex:nearest_neighbor}
    The specific tight-binding model of TBG studied in \cite{Watson_Kong_MacDonald_Luskin_2023} is as follows.  For orbitals on the same layer, the entries are non-zero except when they are the nearest neighbors in the lattice. For some $t_0 > 0$, we have
\begin{equation}
    H_{\vec R_i\sigma, \vec R'_i\sigma'} = 
    \begin{cases}
        -t_0, \; \text{if} \quad \left|\vec R_i + \vec\tau_i^\sigma - \vec R'_i - \vec\tau_i^{\sigma'}\right| = \delta, \\
        0, \quad \text{otherwise.}
    \end{cases}
\end{equation}
    The restriction to nearest-neighbor hopping makes the analysis particularly simple. Since the magnitude of next-nearest-neighbor hopping terms are thought to be $\approx .1 t_0$ \cite{2009Castro-NetoGuineaPeresNovoselovGeim}, we do not expect that neglecting longer-range hops changes any essential model phenomena.

For orbitals on different layers, we can define the entries using an interlayer hopping function that also encodes the interlayer distance $L$. For some $h_0, \alpha>0$, we have
\begin{equation}
     H_{\vec R_i\sigma, \vec R'_j\sigma'} = h(\vec R_i + \vec\tau_i^\sigma - \vec R'_j - \vec\tau_j^{\sigma'}; L), \quad h(\vec r; L) = h_0 e^{-\alpha_0 \sqrt{|\vec r|^2 + L^2}}.
\end{equation}
    A simple calculation shows this Hamiltonian satisfies \cref{as:decay}. Note that the Fourier transform of the interlayer hopping function is \cite{Bateman1954a} 
\begin{equation} \label{eq:h_hat}
\hat h(\vec \xi; L) = 2\pi h_0 \frac{\alpha_0 e^{-L\sqrt{|\vec\xi|^2+\alpha_0^2}}\left(1+L\sqrt{|\vec\xi|^2+\alpha_0^2}\right)}{\left(|\vec\xi|^2+\alpha_0^2\right)^{3/2}}.
\end{equation}
\end{example}

We now identify the dynamics of wave functions in TBG as the solutions to the initial value problem of the time-dependent Schr\"odinger equation
\begin{equation}
\label{eq:dynamic_hbar}
        i\hbar\partial_t \psi = H\psi, \quad \psi(0) = \psi_0 \in \mathcal{H}. 
\end{equation}
We set $\hbar = 1$ for convenience. The solution at time $t$ can be expressed using holomorphic functional calculus. Since $H$ has bounded spectrum, we can find a suitable Jordan curve $\gamma=\partial D$, where $D$ is a simply connected domain such that $\sigma(H)\cap \gamma=\emptyset$ and $\sigma(H)\subset D$. Then Cauchy's integral formula gives the following well-known identity in terms of the Bochner integral \cite{yosida2012functional}
\begin{equation} \label{eq:prop}
    \psi(t) = e^{-iHt}\psi_0 = \frac{1}{2\pi i}\int_\gamma e^{-itz}(z-H)^{-1}\psi_0 \dee z.
\end{equation}
Equation \eqref{eq:prop} connects the propagator to the resolvent, to which Combes-Thomas estimates can be applied; see Theorem \ref{thm:combes-thomas} and \cite{1973CombesThomas}. This idea was also used in the works \cite{Colbrook_Horning_Thicke_Watson_2021,Colbrook_semigroup} on computing dynamics with error control. 
\begin{remark}
    In fact, this model is not limited to twisted bilayer graphene with periodic monolayers. For example, when stacking one layer of graphene on another, it is experimentally observed that carbon atoms undergo a small displacement to minimize the total energy, a phenomenon called mechanical relaxation 
    \cite{Cazeaux_Luskin_Massatt_2020, Massatt_Carr_Luskin_2021, Yoo_Engelke_Carr_Fang_Zhang_Cazeaux_Sung_Hovden_Tsen_Taniguchi_etal_2019}. The relaxation can be added to the system through a displacement function $u : \Omega \to \mathbb R^2$, and the physical location of an atom indexed by $\vec R_i\sigma$ in a relaxed TBG system is $\vec R_i + \vec \tau_i^\sigma + u(\vec R_i\sigma)$. 
\end{remark}
\begin{remark}
    Our approach can also be applied to general aperiodic systems in $n$ dimensions. For these systems, we only assume that there is no accumulation point for the physical locations of orbitals. As long as the Hamiltonian is localized as in \cref{as:decay}, we are able to prove a similar estimate as in this paper. 
\end{remark}

The lattices of TBG are infinite, and it's impossible to numerically compute the dynamics of the infinite system.
We transform the infinite system into a finite system through domain truncation.  This method was used to calculate other observables in TBG, such as the local density of states \cite{Massatt_Luskin_Ortner_2017}. 
First, let $\Omega_{R}$ be the finite subset of atomic orbital indices inside a ball $B_R$,
\begin{equation}\begin{split}
    \Omega_{R} := \{ \vec R_i\sigma \in \Omega : |\vec R_i + \vec \tau_i^\sigma| \leq R\}.
\end{split}\end{equation}
We can define the finite dimensional injection map $P_{R}: \mathcal H \to \ell^2(\Omega_R)$ along with its adjoint $P_{R}^*:  \ell^2(\Omega_R) \to \mathcal H$
\begin{equation}
    P_R \psi = \left(\psi_{\vec R_i\sigma}\right)_{\vec R_i\sigma \in \Omega_R}, \quad 
    \left(P_R^*\Psi\right)_{\vec R_i \sigma} = \begin{cases}
        \Psi_{\vec R_i\sigma}, \; \vec R_i\sigma \in \Omega_R, \\
        0, \;\text{otherwise}.
    \end{cases}
\end{equation}
The finite dimensional restriction on the Hamilonian is
\begin{equation}
\label{eq:H_trunc}
H_R := P_R H P_R^*,
\end{equation}
which is a $|\Omega_R| \times |\Omega_R|$ Hermitian matrix. The truncated $H_R$ ignores all interactions with sites outside a ball of radius $R$.
The truncation is only valid when the wave function is spatially concentrated inside the ball $B_R$. To make sure this is true over a period of time, we assume the initial condition is concentrated on a smaller ball $B_r$ with radius $r < R$.

For any set $A\subset \mathbb{R}^2$, we define $\mathcal{X}_A$ as the characteristic function 
\begin{equation}
 \left(\mathcal{X}_A \right)_{\vec R_i\sigma} = 
 \begin{cases}
     1, \; \text{if } \vec R_i + \tau_i^\sigma \in A, \\
     0, \;\text{ otherwise.}
 \end{cases}
\end{equation}
We thus have
\begin{equation}
 \left(\mathcal{X}_A \psi\right)_{\vec R_i\sigma} = 
 \begin{cases}
     \psi_{\vec R_i\sigma}, \; \text{if } \vec R_i + \tau_i^\sigma \in A, \\
     0, \;\text{ otherwise,}
 \end{cases}
\end{equation}
with the properties
\begin{equation}
\mathcal{X}_A + \mathcal{X}_{A^\complement} = I, \quad \|\psi\|_\mathcal{H} = \left\| \mathcal{X}_A\psi \right\|_\mathcal{H} + \left\| \mathcal{X}_{A^\complement} \psi \right\|_\mathcal{H}.
\end{equation}

\begin{assumption}[Decay of the initial condition] 
\label{as:intial_condition}
The initial condition $\psi_0$ of the initial value problem \cref{eq:dynamic_hbar} satisfies
\begin{equation}
        \left\| \mathcal{X}_{B_r^\complement}\psi_0 \right\| _{\mathcal H} \leq \phi(r), \quad \lim_{r\to \infty}\phi(r)=0.
\end{equation}
\end{assumption}

We make a further truncation on the initial value $\psi_0$ by restricting it to only $B_r$, and denote $\Psi_0 := P_R \mathcal{X}_{B_r} \psi_0$.
 We now identify the truncated dynamics of wave functions in TBG as the following initial value problem of the finite dimensional Schr\"odinger equation
\begin{equation}
\label{eq:trunc_dynamic}
        i\hbar\partial_t \Psi = H_{R}\Psi, \quad \Psi(0) = \Psi_0  \in \ell^2(\Omega_R).
\end{equation}
The exact solution of the truncated Schr\"odinger equation is $\exp{(-iH_Rt)}\Psi_0 $.  We can establish the error of truncation as the difference between solutions of the infinite dimensional problem and the finite dimensional truncated problem. The essence of the estimates on the domain truncation error is a Combes-Thomas style estimate on the decay of the resolvent.

 \begin{theorem}[Exponential Decay of Resolvent]
 \label{thm:combes-thomas}
Let $H$ be a tight binding Hamiltonian that satisfies \cref{as:decay}. Fix $d$ positive and $\nu \in (0,1)$, then for any $z\in\mathbb C$ that satisfies $\operatorname{dist}(z,\sigma(H))\geq d$, there exists a constant $\alpha_{\max}$ depending on $h_0$, $\alpha_0$, $d$ and $\nu$ such that
\begin{equation}
\label{eq:CTestimate}
    \left|(z-H)^{-1}_{\vec R_i\sigma, \vec R'_j\sigma'}\right| \leq \frac{1}{\nu d}e^{-\alpha_{\max} \left|\vec R_i + \vec\tau_i^\sigma - \vec R'_j - \vec\tau_j^{\sigma'}\right| }.
\end{equation}
Recalling the distance between nearest graphene atoms $\delta$ \cref{eq:relative_shift} and tight-binding parameters $h_0$ and $\alpha_0$, 
the upper bound $\alpha_{\max}$ is the solution to the equation
\begin{equation}\label{eq:alpha_m_d_dependence}
	\frac{8\pi h_0 e^{\delta\alpha_0}}{|\Gamma|}\left[\frac{e^{\delta\alpha_{\max}}}{(\alpha_0 - \alpha_{\max})^2} - \frac{1}{\alpha_0^2} \right] = (1-\nu) d.
\end{equation}


\end{theorem}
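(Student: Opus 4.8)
The plan is to run the standard Combes--Thomas gauge-transformation argument, adapted to the off-diagonal exponential decay of \cref{as:decay}. Fix the two orbital indices of interest, with physical positions $p := \vec R_i + \vec\tau_i^\sigma$ and $q := \vec R'_j + \vec\tau_j^{\sigma'}$, and introduce the bounded, $1$-Lipschitz weight $f(w) := \min\bigl(|w - q|, |p-q|\bigr)$ together with the multiplication operator $\bigl(e^{\alpha f}\psi\bigr)_{\vec R_i\sigma} = e^{\alpha f(\vec R_i + \vec\tau_i^\sigma)}\psi_{\vec R_i\sigma}$, for a parameter $\alpha \in (0,\alpha_0)$ to be chosen. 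Since $z$ commutes with any multiplication operator, conjugation gives $z - H_\alpha = e^{\alpha f}(z-H)e^{-\alpha f}$ with $H_\alpha := e^{\alpha f}He^{-\alpha f}$, so that $(z-H_\alpha)^{-1} = e^{\alpha f}(z-H)^{-1}e^{-\alpha f}$ whenever the left-hand side exists. Reading off a single matrix element and using $f(p) = |p-q|$, $f(q) = 0$ then yields
\begin{equation}
  \bigl|(z-H)^{-1}_{\vec R_i\sigma, \vec R'_j\sigma'}\bigr| = e^{-\alpha|p - q|}\bigl|(z-H_\alpha)^{-1}_{\vec R_i\sigma, \vec R'_j\sigma'}\bigr| \le e^{-\alpha|p-q|}\,\|(z-H_\alpha)^{-1}\|,
\end{equation}
so the entire estimate reduces to bounding the operator norm $\|(z-H_\alpha)^{-1}\|$ uniformly over the pair.

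To control $\|(z-H_\alpha)^{-1}\|$ I would treat $H_\alpha$ as a perturbation of $H$. The perturbation has matrix elements $(H_\alpha - H)_{xy} = \bigl(e^{\alpha(f(x)-f(y))} - 1\bigr)H_{xy}$, and since $|f(x)-f(y)| \le |x-y|$ and $|e^s - 1| \le e^{|s|} - 1$, \cref{as:decay} gives the pointwise bound $|(H_\alpha - H)_{xy}| \le h_0\bigl(e^{-(\alpha_0-\alpha)|x-y|} - e^{-\alpha_0|x-y|}\bigr)$. A Schur test bounds $\|H_\alpha - H\|$ by $\max_x \sum_y |(H_\alpha-H)_{xy}|$, and I would estimate this lattice sum by comparison with a continuum integral exactly as in the proof of \cref{lem:bounded}, accounting for the four sublattices (two layers, two sublattices each), the cell area $|\Gamma|$, and the minimum-distance correction factor $e^{\delta\alpha_0}$. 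This produces the explicit bound
\begin{equation}
  \|H_\alpha - H\| \le \frac{8\pi h_0 e^{\delta\alpha_0}}{|\Gamma|}\left[\frac{e^{\delta\alpha}}{(\alpha_0 - \alpha)^2} - \frac{1}{\alpha_0^2}\right],
\end{equation}
whose right-hand side vanishes at $\alpha = 0$, increases continuously in $\alpha$, and diverges as $\alpha \to \alpha_0^-$.

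Given the hypothesis $\operatorname{dist}(z,\sigma(H)) \ge d$, self-adjointness of $H$ (\cref{lem:bounded}) gives $\|(z-H)^{-1}\| \le 1/d$. Choosing $\alpha = \alpha_{\max}$ as the solution of \eqref{eq:alpha_m_d_dependence} — which is unique by the monotonicity just noted together with the intermediate value theorem, and lies in $(0,\alpha_0)$ — forces $\|H_\alpha - H\| \le (1-\nu)d$, hence $\|(H_\alpha - H)(z-H)^{-1}\| \le 1-\nu < 1$. A Neumann series in $(z-H_\alpha)^{-1} = (z-H)^{-1}\bigl[I - (H_\alpha - H)(z-H)^{-1}\bigr]^{-1}$ then shows that $(z-H_\alpha)^{-1}$ exists and satisfies $\|(z-H_\alpha)^{-1}\| \le (1/d)/(1-(1-\nu)) = 1/(\nu d)$. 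Substituting into the matrix-element inequality above gives precisely \eqref{eq:CTestimate}.

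I expect the main obstacle to be the Schur-test lattice-sum estimate that yields the sharp constants in \eqref{eq:alpha_m_d_dependence}: one must make the discrete-to-continuum comparison rigorous, correctly placing the minimum-distance correction $e^{\delta\alpha_0}$ and tracking both decay rates $\alpha_0$ and $\alpha_0-\alpha$, which is exactly the computation underlying \cref{lem:bounded}. A secondary technical point is that $e^{\alpha|w-q|}$ is unbounded; truncating the weight at $|p-q|$ as above keeps $e^{\alpha f}$ bounded and boundedly invertible for each fixed pair while leaving the Lipschitz constant — and hence the perturbation bound — unchanged, so the resulting estimate holds uniformly over all index pairs.
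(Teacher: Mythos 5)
Your proposal is correct and follows essentially the same route as the paper's proof in \cref{sec:appendix_combes_thomas}: conjugation by an exponential weight, a Schur-test/lattice-to-integral bound on $\|H_\alpha - H\|$ yielding exactly the left-hand side of \eqref{eq:alpha_m_d_dependence}, and a Neumann series giving $\|(z-H_\alpha)^{-1}\| \le 1/(\nu d)$. The only (welcome) difference is your truncated weight $f(w)=\min(|w-q|,|p-q|)$, which keeps the conjugating operator genuinely bounded, whereas the paper works with the unbounded weight $e^{\alpha|\vec x - \vec k|}$ and relies on the conjugated Hamiltonian itself being bounded.
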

\begin{proof}
    See \cref{sec:appendix_combes_thomas}. The proof uses a similar approach as previous results in \cite{Chen_Ortner_2016,E_Lu_2011}. We take advantage of the TBG structure to explicitly calculate the constants, and derive an exact dependence on the distance to the spectrum.
\end{proof}

We pause briefly to discuss the significance of the parameter $\nu$ appearing in Theorem \ref{thm:combes-thomas}. It can be seen from \eqref{eq:CTestimate} and \eqref{eq:alpha_m_d_dependence} that different choices optimize the short-range and asymptotic behavior of the estimates. If $\nu \rightarrow 0$, the constant on the right-hand side of \eqref{eq:CTestimate} tends to infinity. On the other hand, if $\nu \rightarrow 1$, it is clear from \eqref{eq:alpha_m_d_dependence} that $\alpha_{\max} \rightarrow 0$, so that the exponential decay in \eqref{eq:CTestimate} becomes trivial. It follows that the optimal choice of $\nu$ may depend on the situation. We set $\nu = \frac{1}{2}$ in what follows for simplicity of presentation.
We next provide an upper bound for the speed of propagation of wave-packets in TBG.
\begin{proposition}
\label{prop:speed_of_propagation}
    Consider the solution of the full Schr\"odinger equation with a discrete delta function at the origin $\vec{0}$, $\delta_{\vec{0}}$, as the initial condition. Letting 
 $d,\alpha_{\max}$ be defined as in \cref{thm:combes-thomas} and $\nu = 1/2$, and a contour $\gamma$ with $\operatorname{dist}(\gamma, \sigma(H)) > d$, then we have the estimate
    \begin{equation}
        \label{eq:shiftpeak}\left|\psi_{\vec R_i\sigma}(t)\right| = \left|\frac{1}{2\pi} \int_\gamma e^{-itz}(z-H)^{-1}\delta_{\vec{0}} \dee z\right| \leq \frac{C_\gamma}{\pi d}e^{- ( \alpha_{\max} \left|\vec R_i + \vec\tau_i^\sigma \right| - d t )},
    \end{equation}
    where $C_\gamma$ is the length of the contour.   
    For $d$ small, we have 
        \begin{equation} \label{eq:linearized}
         \left|\psi_{\vec R_i\sigma}(t)\right| \leq \frac{C_\gamma}{\pi d} e^{- \frac{d}{v_{\max}} \left(\left|\vec R_i + \vec\tau_i^\sigma \right| - v_{\max} t \right) + O(d^2)},\quad v_{\max} := \frac{16\pi  e^{\delta\alpha_0} (2+\delta\alpha_0) h_0}{|\Gamma| \alpha_0^3}.
    \end{equation}
 \end{proposition}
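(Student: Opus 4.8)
The plan is to start from the contour-integral representation \eqref{eq:prop} of the propagator with $\psi_0=\delta_{\vec 0}$ and to estimate the integrand pointwise using the Combes--Thomas bound of \cref{thm:combes-thomas}. Since $(z-H)^{-1}\delta_{\vec 0}$ is simply the column of the resolvent indexed by the origin, its $\vec R_i\sigma$ entry equals $(z-H)^{-1}_{\vec R_i\sigma,\vec 0}$, and \eqref{eq:CTestimate} with $\nu=\tfrac12$ gives $\bigl|(z-H)^{-1}_{\vec R_i\sigma,\vec 0}\bigr|\le \tfrac{2}{d}\,e^{-\alpha_{\max}|\vec R_i+\vec\tau_i^\sigma|}$ for every $z$ with $\operatorname{dist}(z,\sigma(H))\ge d$. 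Thus the spatial decay appearing in \eqref{eq:shiftpeak} is inherited directly from the resolvent decay, uniformly in $z\in\gamma$.

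The remaining point is to control the oscillatory factor $e^{-itz}$, which fixes the choice of contour. Writing $z=x+iy$ one has $|e^{-itz}|=e^{t\operatorname{Im} z}$, so the growth in $t$ is governed entirely by how far $\gamma$ rises off the real axis. Because $\sigma(H)\subset[-\|H\|,\|H\|]\subset\mathbb R$ by \cref{lem:bounded}, I would take $\gamma$ to be the boundary of the $d$-neighborhood (a ``stadium'') of the interval $[-\|H\|,\|H\|]$: on its two straight edges $\operatorname{Im} z=\pm d$, and on its two semicircular caps $|\operatorname{Im} z|\le d$, so that $|\operatorname{Im} z|\le d$ holds on all of $\gamma$ while $\operatorname{dist}(\gamma,\sigma(H))\ge d$ (the strict inequality $>d$ in the statement is recovered by enlarging the radius infinitesimally and passing to the limit). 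Consequently $|e^{-itz}|\le e^{dt}$ on $\gamma$, and bounding the integral by the length $C_\gamma$ of $\gamma$ times the supremum of its integrand yields
\[
|\psi_{\vec R_i\sigma}(t)| \le \frac{1}{2\pi}\,C_\gamma\,e^{dt}\,\frac{2}{d}\,e^{-\alpha_{\max}|\vec R_i+\vec\tau_i^\sigma|} = \frac{C_\gamma}{\pi d}\,e^{-(\alpha_{\max}|\vec R_i+\vec\tau_i^\sigma|-dt)},
\]
which is precisely \eqref{eq:shiftpeak}.

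For the linearized bound \eqref{eq:linearized} I would analyze the implicit relation \eqref{eq:alpha_m_d_dependence} determining $\alpha_{\max}$ in the small-$d$ limit. At $\nu=\tfrac12$ its left-hand side is $\tfrac{8\pi h_0 e^{\delta\alpha_0}}{|\Gamma|}f(\alpha_{\max})$ with $f(\alpha):=\frac{e^{\delta\alpha}}{(\alpha_0-\alpha)^2}-\frac{1}{\alpha_0^2}$, and $f(0)=0$, so $d\to0$ forces $\alpha_{\max}\to0$. A first-order Taylor expansion, using $f(0)=0$ and $f'(0)=\frac{2+\delta\alpha_0}{\alpha_0^3}$, gives $\frac{8\pi h_0 e^{\delta\alpha_0}}{|\Gamma|}\,\frac{2+\delta\alpha_0}{\alpha_0^3}\,\alpha_{\max}=\frac{d}{2}+O(\alpha_{\max}^2)$, i.e. $\alpha_{\max}=\frac{d}{v_{\max}}+O(d^2)$ with $v_{\max}=\frac{16\pi e^{\delta\alpha_0}(2+\delta\alpha_0)h_0}{|\Gamma|\alpha_0^3}$. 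Substituting into the exponent of \eqref{eq:shiftpeak} and writing $dt=\frac{d}{v_{\max}}\,v_{\max}t$ converts $-(\alpha_{\max}|\vec R_i+\vec\tau_i^\sigma|-dt)$ into $-\frac{d}{v_{\max}}\bigl(|\vec R_i+\vec\tau_i^\sigma|-v_{\max}t\bigr)+O(d^2)$, which is \eqref{eq:linearized}.

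I expect the only genuinely delicate step to be the choice of contour: it must simultaneously stay a distance $\ge d$ from the (real) spectrum so that \cref{thm:combes-thomas} applies, and keep $|\operatorname{Im} z|\le d$ so that the time growth is no worse than $e^{dt}$. The stadium contour reconciles these two competing requirements exactly, and it is this balance that produces the light-cone structure $|\vec R_i+\vec\tau_i^\sigma|\lesssim v_{\max}t$ and identifies $v_{\max}$ as the propagation speed. Everything else reduces to a direct application of \cref{thm:combes-thomas} together with the elementary Taylor expansion above.
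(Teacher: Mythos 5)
Your proposal is correct and follows essentially the same route as the paper: the bound \eqref{eq:shiftpeak} is obtained exactly as you describe, by inserting the Combes--Thomas estimate of \cref{thm:combes-thomas} (with $\nu=\tfrac12$) into the contour representation \eqref{eq:prop} and bounding $|e^{-itz}|\le e^{dt}$ on a contour at distance $d$ from the real spectrum, and the linearized bound \eqref{eq:linearized} comes from expanding the implicit relation \eqref{eq:alpha_m_d_dependence} to first order in $d$, which is precisely the paper's implicit-function-theorem argument. Your version is in fact more explicit than the paper's sketch --- in particular the stadium contour reconciling $\operatorname{dist}(\gamma,\sigma(H))\ge d$ with $|\operatorname{Im}z|\le d$, and the computation $f'(0)=(2+\delta\alpha_0)/\alpha_0^3$ identifying $v_{\max}$ --- and both details check out.
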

\begin{proof}
    Equation \cref{eq:alpha_m_d_dependence} is satisfied when $\alpha_{\max} = d = 0$, and using the implicit function theorem we can find a real-analytic function $d \mapsto \alpha_{\max}(d)$ with $\alpha_{\max}(0) = 0$ such that \cref{eq:alpha_m_d_dependence} is satisfied for all $d$ in a neighborhood of $0$. Inserting the power series of this function and balancing terms proportional to $d$ yields \cref{eq:linearized}.
\end{proof}
The constant $v_{\max}$ can be considered an upper bound for the speed of electron propagation in TBG. To see this, note that Proposition \ref{prop:speed_of_propagation} implies that, for compactly supported initial data $\psi_0$, the magnitude of $\psi_{\vec{R}_i \sigma}$ is exponentially small in $\text{dist}\left(\vec{R}_i + \vec{\tau}_i^\sigma,\supp \psi_0\right) - v_{\max} t$, where $\supp \psi_0$ denotes the support of $\psi_0$. It follows that the magnitude must be small until, at least, $t \approx \frac{\text{dist}\left(\vec{R}_i + \vec{\tau}_i^\sigma,\supp \psi_0\right)}{v_{\max}}$.  
The following theorem uses the finite speed of propagation to bound the truncation error.

\begin{theorem}[Truncation Estimate]\label{thm:trunc_main}
Suppose the TBG Hamiltonian satisfies \cref{as:decay}, and the wave-packet initial condition $\psi_0$ satisfies \cref{as:intial_condition}.
Let $\psi(t) : [0,\infty) \to \mathcal H$ be the solution to the initial value problem of the Schr\"odinger equation  on TBG \cref{eq:dynamic_hbar}
\begin{equation}
        i\hbar\partial_t \psi = H\psi, \quad \psi(0) = \psi_0.
\end{equation} 
Let $R$ be the truncation radius of the Hamiltonian, and $r$ be the truncation radius of the initial condition with $R > r$. Let $\Psi(t) : [0,\infty) \to \ell^2(\Omega_R)$ be the solution to the finite dimensional truncated equation.
\begin{equation}\label{eq:tb_dynamics_trunc}
        i\hbar\partial_t \Psi = H_{R}\Psi, \quad \Psi(0) = P_R \mathcal{X}_{B_r} \psi_0. 
\end{equation}
Let $\zeta(t)$ be the difference between the two solutions
 \begin{equation}
     \zeta(t) :=  \psi(t)- P_R^*\Psi(t),
 \end{equation}
and the truncation error be the norm $\|\zeta(t)\|_\mathcal{H}$.
For any spectral distance $d$, we can find a  closed contour $\Gamma$ around $\sigma(H)$ such that $\operatorname{dist}(\Gamma, \sigma(H)) > d$. Let $\nu$, $C_\gamma$ and $\alpha_{\max}$ be defined as in \cref{thm:combes-thomas}, 
we can bound the truncation error by
\begin{equation}
    \begin{split}
        &\|\zeta(t)\|_\mathcal{H} \leq \\ &\quad \sqrt{\frac{2}{\pi}} C_\gamma C_1(\alpha_{\max}, R) \left(\frac{h_0 |\Omega_R|}{\nu^2d^2} + \frac{1}{\nu d} \right) \frac{|\Omega_r|^{\frac{1}{2}}}{|\Gamma|^{\frac{1}{2}}} e^{-(\alpha_{\max}(R-r) - d t)}\left\| \mathcal{X}_{B_r} \psi_0 \right\|_{\mathcal H} + \phi(r), 
    \end{split}
\end{equation}
where $C_\gamma$ is the length of the contour, $|\Omega_R|$ is the number of orbitals in the truncated domain with radius $R$, $|\Gamma| = \sqrt{3}a^2/2$ is the area of a unit cell, and the coefficient $C_1(\alpha_{\max}, R)$ is explicitly
\begin{equation}
C_1(\alpha_{\max}, R) := e^{\delta\alpha_{\max}} \frac{(1+2R\alpha_{\max} - 2\delta\alpha_{\max})^{\frac 1 2}}{2\alpha_{\max}}.
\end{equation}
\end{theorem}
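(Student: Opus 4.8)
The plan is to separate two distinct sources of error: truncation of the initial data, and truncation of the propagator. First I would split $\psi_0 = \mathcal{X}_{B_r}\psi_0 + \mathcal{X}_{B_r^\complement}\psi_0$ and use unitarity of $e^{-iHt}$ to dispose of the tail, $\|e^{-iHt}\mathcal{X}_{B_r^\complement}\psi_0\|_{\mathcal H} = \|\mathcal{X}_{B_r^\complement}\psi_0\|_{\mathcal H} \le \phi(r)$ by \cref{as:intial_condition}. This produces the additive $\phi(r)$ in the estimate and reduces the problem to controlling $\|e^{-iHt}\tilde\psi_0 - P_R^* e^{-iH_Rt}P_R\tilde\psi_0\|_{\mathcal H}$ for the localized datum $\tilde\psi_0 := \mathcal{X}_{B_r}\psi_0$, which is supported in $B_r$.

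Next I would represent both evolutions by the contour integral \cref{eq:prop}. Since $\|H_R\| = \|P_RHP_R^*\| \le \|H\|$, \cref{lem:bounded} gives $\sigma(H_R)\subset[-\|H\|,\|H\|]$ as well, so a single contour $\Gamma$ enclosing this interval at distance $\ge d$ serves for both resolvents; moreover $H_R$ inherits \cref{as:decay} (its matrix elements are a subset of those of $H$), so \cref{thm:combes-thomas} applies to $(z-H_R)^{-1}$ with the same $\alpha_{\max}$. The core of the argument is then an exact comparison of the two resolvents applied to $\tilde\psi_0$. Writing $w := (z-H)^{-1}\tilde\psi_0$ and using $P_R^* P_R = \mathcal{X}_{B_R}$, $P_R P_R^* = I$, and $H_R = P_R H P_R^*$, I would apply $P_R$ to $(z-H)w = \tilde\psi_0$, substitute $w = P_R^*P_Rw + \mathcal{X}_{B_R^\complement}w$, and solve for $P_Rw$ to obtain
\begin{equation*}
(z-H)^{-1}\tilde\psi_0 - P_R^*(z-H_R)^{-1}P_R\tilde\psi_0 = P_R^*(z-H_R)^{-1}P_R H\,\mathcal{X}_{B_R^\complement}w + \mathcal{X}_{B_R^\complement}w .
\end{equation*}

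Both terms on the right are governed by the single boundary quantity $\mathcal{X}_{B_R^\complement}w = \mathcal{X}_{B_R^\complement}(z-H)^{-1}\tilde\psi_0$, which is exponentially small because $\tilde\psi_0$ lives in $B_r$ while we read off components outside $B_R$. Applying \cref{thm:combes-thomas} entrywise, using $|\vec R_i+\vec\tau_i^\sigma - \vec R'_j - \vec\tau_j^{\sigma'}| \ge |\vec R_i + \vec\tau_i^\sigma| - r$ for a source point in $B_r$, and then Cauchy--Schwarz (which yields $|\Omega_r|^{1/2}\|\tilde\psi_0\|_{\mathcal H}$), I would bound $\|\mathcal{X}_{B_R^\complement}w\|_{\mathcal H}$ by $\tfrac{1}{\nu d}$ times $e^{-\alpha_{\max}(R-r)}$ times a lattice-sum factor; this gives the $\tfrac{1}{\nu d}$ coefficient. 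The first, more singular, term I would estimate by applying \cref{thm:combes-thomas} a second time to $(z-H_R)^{-1}$ (a crude $\ell^1\to\ell^2$ bound over $\Omega_R$ contributing $|\Omega_R|/(\nu d)$) together with the hopping decay \cref{as:decay} across the boundary (contributing $h_0$), producing the $h_0|\Omega_R|/(\nu^2 d^2)$ coefficient multiplying the same exponential and lattice factors. Finally I would integrate over $\Gamma$: the prefactor $\tfrac{1}{2\pi}$ and the contour length give $C_\gamma/(2\pi)$, while $|e^{-itz}| = e^{t\,\mathrm{Im}\,z} \le e^{dt}$ on a contour hugging the spectrum at height $d$ produces the factor $e^{dt}$ in $e^{-(\alpha_{\max}(R-r)-dt)}$, and collecting the remaining constants yields $\sqrt{2/\pi}$.

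The main obstacle is not the exponential rate, which follows immediately from \cref{thm:combes-thomas}, but the honest evaluation of the lattice sum $\sum_{|\vec R_i+\vec\tau_i^\sigma|>R} e^{-2\alpha_{\max}|\vec R_i+\vec\tau_i^\sigma|}$. Comparing this discrete sum to the radial integral $\int_R^\infty \rho\,e^{-2\alpha_{\max}\rho}\,d\rho$ via the orbital density $\sim |\Gamma|^{-1}$, and accounting for the sublattice shifts $\vec\tau_i^\sigma$, is exactly what produces the explicit prefactor $C_1(\alpha_{\max},R)$ with its $\delta$-dependent corrections together with the $|\Gamma|^{-1/2}$. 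Tracking these geometric constants through both terms of the resolvent identity, while keeping the powers of $\nu d$, $h_0$, $|\Omega_R|$, and $|\Omega_r|$ straight, is the delicate part of the argument.
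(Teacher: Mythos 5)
Your proposal is correct and follows essentially the same route as the paper's proof: split off the $B_r^\complement$ tail of the initial data by unitarity to obtain $\phi(r)$, represent both propagators by the contour integral \cref{eq:prop}, compare the two resolvents by a second-resolvent-identity computation that isolates one term controlled by the decay of $(z-H)^{-1}$ from $B_r$ to outside $B_R$ (the $1/(\nu d)$ contribution) and one term carrying the extra hop across the boundary and the factor $h_0|\Omega_R|/(\nu^2 d^2)$, then conclude with \cref{thm:combes-thomas}, the lattice-sum-to-integral comparison, and $|e^{-itz}|\le e^{dt}$ on the contour. The only (immaterial) difference is that your factorization $P_R^*(z-H_R)^{-1}P_R H\,\mathcal{X}_{B_R^\complement}w+\mathcal{X}_{B_R^\complement}w$ writes the paper's two terms $I_1$ and $I_2$ with the resolvents in mirror order; the two expressions are the same operator applied to $\mathcal{X}_{B_r}\psi_0$, so the resulting bounds coincide.
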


\begin{proof}
See \cref{sec:proof_main}.
\end{proof}

\begin{remark}
The truncation error decays exponentially for $R$ large. More importantly, suppose for $\epsilon_0 > 0$ and $T_0>0$, we can find $R_0$ such that for $R=R_0$ the truncation error $\|\zeta(t)\|_\mathcal{H} < \epsilon_0$  for any $t \in [0, T_0]$. 
We can use the exponent $dt - \alpha_{\max}(R-r)$ to conclude that if we want to control the error for a longer period of time, we only need to scale $R$ linearly. The scaling factor is directly related to the finite speed of propagation $v_{\max}$ introduced in \cref{prop:speed_of_propagation}. We are able to numerically verify the exponential decay, see \cref{fig:trunc_R}.
\end{remark}

\subsection{Bistritzer-MacDonald Hamiltonian}\label{sec:BM}

The Bistritzer-MacDonald (BM) model is a low-energy continuum approximation of TBG that predicts the physical properties of TBG with high accuracy \cite{Bistritzer_MacDonald_2011}. In particular, the model correctly predicted a series of magic angles, the largest being $\theta \approx 1.05^\circ$, where the TBG has Mott insulating and superconducting phases \cite{Cao_Fatemi_Demir_Fang_Tomarken_Luo_Sanchez-Yamagishi_Watanabe_Taniguchi_Kaxiras_etal_2018, Cao_Fatemi_Fang_Watanabe_Taniguchi_Kaxiras_Jarillo-Herrero_2018}. Three of the authors in this work  identified a parameter regime where the BM model approximates the tight-binding model of TBG with rigorous error estimate \cite{Watson_Kong_MacDonald_Luskin_2023}. In this subsection, we briefly introduce the BM model and how it can be related to the tight-binding dynamics of wave-packets.

We first define the momentum hops (with three-fold symmetry) as
\begin{equation}
    \begin{split}
        &\vec s_1 := \vec{K}_1 - \vec{K}_2 = |\Delta \vec{K}| \left( 0 , -1 \right)^\top, \\ &\vec{s}_2 := \vec{s}_1 + \vec{b}_{m,2} =  |\Delta \vec{K}| \left( \frac{\sqrt{3}}{2} , \frac{1}{2} \right)^\top,\\
        &\vec{s}_3 := \vec{s}_1 - \vec{b}_{m,1} = |\Delta \vec{K}| \left( - \frac{\sqrt{3}}{2} , \frac{1}{2} \right)^\top.
    \end{split}
\end{equation}
The length of the momentum hop vectors is the difference between Dirac points of the two monolayers, which depends on the twist angle $\theta$.
The momentum interlayer hopping matrices  are
\begin{equation}
        T_1 = \begin{pmatrix} 1 & 1 \\ 1 & 1 \end{pmatrix}, \quad 
        T_2 =  \begin{pmatrix} 1 & e^{- \frac{2\pi}{3}i } \\ e^{\frac{2\pi}{3}i} & 1 \end{pmatrix}, \quad 
        T_3 = \begin{pmatrix} 1 & e^{\frac{2\pi}{3}i } \\ e^{- \frac{2\pi}{3}i} & 1 \end{pmatrix}.
\end{equation}

The Bistritzer-MacDonald Hamiltonian $H_\text{BM}$ is an unbounded self-adjoint operator on the space $L^2(\mathbb{R}^2; \mathbb{C}^4)$ with domain $H^1(\mathbb{R}^2;\mathbb{C}^4)$. We introduce the Bistritzer-MacDonald Hamiltonian written out in physical units
\begin{equation} \label{eq:BM}
   H_\text{BM} := 
    \begin{pmatrix} v \vec{\sigma} \cdot (- i \nabla_{\vec r}) &  \displaystyle w \sum_{n = 1}^3 T_n e^{- i \vec{s}_n \cdot \vec r} \\  \displaystyle w \sum_{n = 1}^3 T_n^\dagger e^{i \vec{s}_n \cdot \vec{r}} & v \vec{\sigma}\cdot (- i \nabla_{\vec{r}}) \end{pmatrix},
    \end{equation}
    where $\vec{\sigma} = (\sigma_1, \sigma_2)^\top$ denotes the vector of Pauli matrices
\begin{equation}
        \sigma_1 = \begin{pmatrix} 0 & 1 \\ 1 & 0 \end{pmatrix}, \quad 
        \sigma_2 =  \begin{pmatrix} 0 & -i \\ i & 0 \end{pmatrix}.
\end{equation}
The parameters $v$ and $w$ control the strength of intralayer and interlayer hopping
\begin{equation}
    v := \hbar v_D, \quad w := \frac{ \hat{h}(\vec{K}; L) }{ |\Gamma| },
\end{equation} where $v_D$ is the Fermi velocity of monolayer graphene, and $\hat{h}(\vec{K}; L)$ is the two-dimensional Fourier transform of the hopping function evaluated at monolayer Dirac point $\vec{K}$ and interlayer distance $L$. Implicitly, the twist angle $\theta$ enters the BM Hamiltonian through the vectors $\vec{s}_n$. 

The matrix-valued scaled moir\'e interlayer potential 
\begin{equation}
T(\vec r) := w \sum_{n = 1}^3 T_n e^{- i \vec{s}_n \cdot\vec{r}}
\end{equation} is periodic up to a phase over the  moir\'e lattice vectors $\vec R_m$. We define a translation operator with a relative phase shift
\begin{equation}
    \tau_{\vec v} f(\vec r) := \operatorname{diag}(1, 1, e^{i\vec s_1 \cdot \vec v}, e^{i \vec s_1 \cdot \vec v}) f(\vec r + \vec v);
\end{equation}
by direct calculation we have
\begin{equation}
    H_{\text{BM}} \tau_{\vec R_m} = \tau_{\vec R_m} H_{\text{BM}}, \quad \vec R_{m} \in \Lambda_m.
\end{equation}
The commutative property allows us to express the spectrum of $H_\text{BM}$ as Bloch bands. Let $\vec k\in \Gamma_m^*$ be the wavenumber, the band structure can be calculated by solving the following eigenvalue problem 
\begin{equation}\label{eq:BM_eigen}
    H_{\text{BM}} \Phi(\vec r; \vec k) = E(\vec k) \Phi(\vec r; \vec k), \quad \Phi(\vec r; \vec k) := e^{i \vec k\vec r} \operatorname{diag}(1, 1, e^{i\vec s_1 \cdot \vec r}, e^{i \vec s_1 \cdot \vec r}) \phi(\vec r ; \vec k),
\end{equation}
where  $\phi(\vec r; \vec k) = \phi(\vec r + \vec R_m; \vec k)$ for any $\vec R_m \in \Lambda_m$.
For any $\vec k$, the problem can be efficiently solved by finding an orthogonal basis over the periodic cells, and using an eigenvalue solver to find the energies.

\begin{figure}[ht]
    \label{fig:moire_potential}
    \centering
    \includegraphics[width=.95\textwidth]{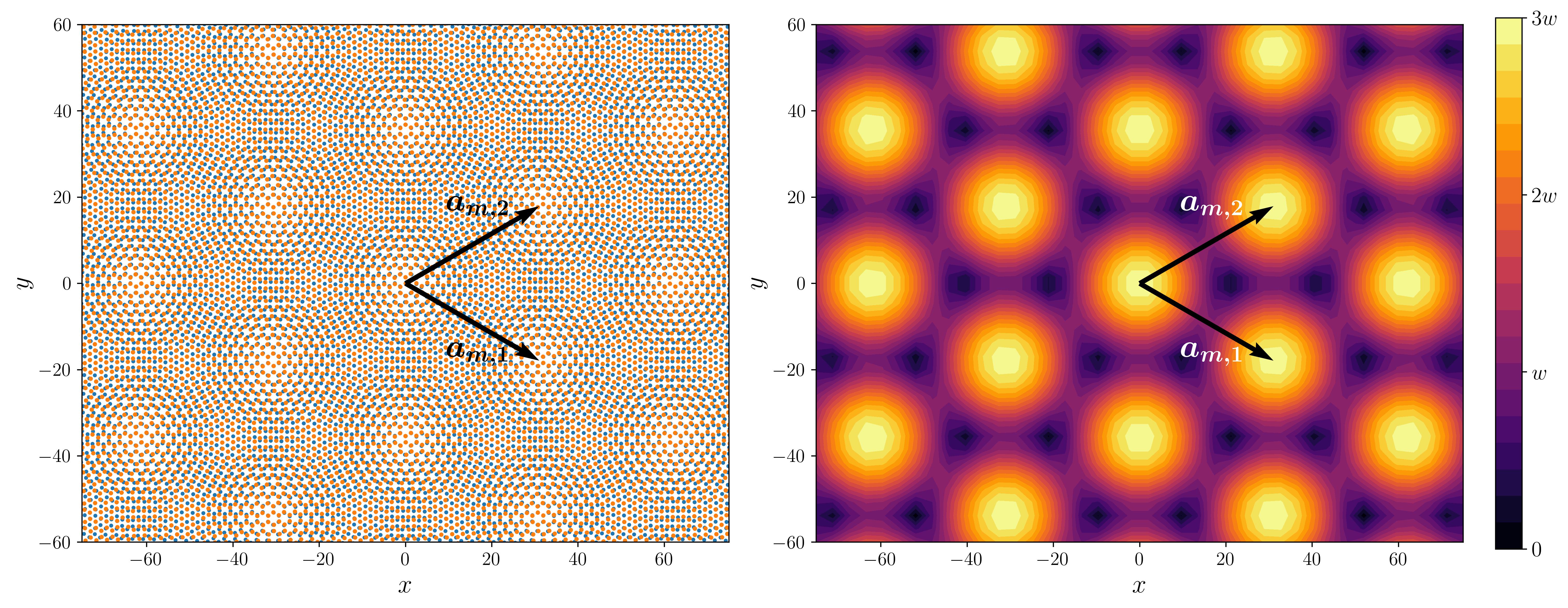}
    \caption{Left: The atomic structures of twisted bilayer graphene at twist angle $\theta=4^\circ$. The moir\'e lattice vectors are $\vec a_{m,1}$ and $\vec a_{m,2}$. Right: The modulus of one entry of the moir\'e interlayer potential $\left[T(\vec r)\right]_{11}$. It is periodic over the moir\'e lattice vectors.} 
\end{figure}

The BM Hamiltonian can be used to approximate the dynamics of wave-packets in TBG, in a specific parameter regime, in the following precise sense. 


\begin{theorem}[Approximation error of the BM model]\label{thm:BM}
Consider the tight-binding Hamiltonian $H_\text{TB}$ in \cref{ex:nearest_neighbor} and the BM Hamiltonian $H_\text{BM}$ in \cref{eq:BM}.

 Suppose there is a small dimensionless parameter $\epsilon > 0$ such that each component of the initial envelope function $f_0(\vec r) = \left(f_{1,0}^A(\vec r), f_{1,0}^B(\vec r), f_{2,0}^A(\vec r), f_{2,0}^B(\vec r)\right)^\top$ of the BM model satisfies the scaling relation
    \begin{equation}\label{eq:scaling}
        f_{i,0}^\sigma (\vec r) = \epsilon g_{i,0}^\sigma(\epsilon \vec r),
    \end{equation}
     where $g_i^\sigma$ has bounded eighth Sobolev norm  
    \begin{equation}\label{eq:regularity}
        \sup_{\sigma \in \{A,B\}} \sup_{i \in \{1,2\}} \| g_{i,0}^\sigma \|_{H^8(\mathbb{R}^2)} \leq C_{g_0},
    \end{equation}
    for some constant $C_{g_0}$.

    Further suppose the initial condition of tight-binding model is generated by 
        \begin{equation} \label{eq:wave_packet_0}
             (\psi_{0})_{\vec{R}_i\sigma} =  f^\sigma_{i,0}\left(  \vec{R}_i + \vec{\tau}^\sigma_i \right) e^{i \vec{K}_i \cdot (\vec{R}_i + \vec{\tau}_i^\sigma)}, \quad i \in \{1,2\}, \sigma \in \{A,B\}.
\end{equation}
Let $\psi(t)$ and $f(\vec r, t)$ be the solution to the time-dependent Schr\"odinger equations respectively
    \begin{equation}\label{eq:tb_dynamics}
    i\hbar\partial_t \psi = H_\text{TB}\psi, \quad \psi(0) = \psi_0,
\end{equation}
    \begin{equation}\label{eq:bm_dynamics}
    i\hbar\partial_t f = H_\text{BM}f, \quad f(\vec r, 0) = f_0(\vec r).
\end{equation}
Then $\psi(t)$ satisfies
\begin{equation}\label{eq:wave_packet_t}
             \psi_{\vec{R}_i\sigma}(t) =   f^\sigma_i(  \vec{R}_i + \vec{\tau}^\sigma_i, t)e^{i \vec{K}_i \cdot (\vec{R}_i + \vec{\tau}^\sigma_i)} + \eta_{\vec{R}_i\sigma}(t),       
    \end{equation}
where $i \in \{1,2\}, \sigma \in \{A,B\}$, and $\eta$ is the corrector.

The norm of the corrector depends on three small dimensionless parameters 
\[\epsilon,\ \theta,\text{ and }  \mathfrak{h} := aw/v = \frac{a \hat h(\vec K; L)}{ \hbar v_D|\Gamma|}.\]
Specifically, there exist constants $c_1,c_2>0$ which can be taken arbitrarily small, and a continuous function $\rho$ satisfying
\begin{equation}
	\lim_{\xi\to 0} \rho(\xi) = C, \quad \lim_{\xi\to \infty} \rho(\xi) = \infty,
\end{equation}
so that
    \begin{equation} \label{eq:key_estimate}
\| \eta(t) \|_\mathcal{H} \leq  \rho\left(\frac{\theta}{\epsilon}\right) \times \left( \left(\epsilon^2 + \theta\epsilon + \epsilon\mathfrak{h}^{1-c_1} + \mathfrak{h}^{2-c_2} \right) t\right).
\end{equation}

Under the additional assumption that there exist positive constants $\lambda_0$ and $\lambda_1$ such that
\begin{equation}\label{eq:BM_regime}
    \mathfrak{h}  = \lambda_0 \epsilon\quad\text{and}\quad \theta \leq \lambda_1 \epsilon,
\end{equation}
then there exists a constant $\epsilon_0 > 0$ such that for all $\epsilon < \epsilon_0$, and for any $c>0$, the leading order term is
\begin{equation} \label{eq:BM_error}
    \| \eta(t) \|_{\mathcal{H}} = O\left(\epsilon^{2 - c} t\right),
\end{equation}
where $c$ can be taken arbitrarily small.

\end{theorem}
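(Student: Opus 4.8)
The plan is to prove \cref{thm:BM} by a two-scale expansion combined with a Duhamel stability estimate that exploits the unitarity of the tight-binding propagator. First I would introduce the leading-order modulated wave-packet ansatz $\psi^{\mathrm{ans}}_{\vec R_i\sigma}(t) := f^\sigma_i(\vec R_i + \vec\tau_i^\sigma, t)\, e^{i\vec K_i\cdot(\vec R_i + \vec\tau_i^\sigma)}$, in which $f$ solves the BM equation \cref{eq:bm_dynamics}, and compute the residual $\mathcal{F}(t) := (i\hbar\partial_t - H_\text{TB})\psi^{\mathrm{ans}}(t) \in \mathcal{H}$. By \cref{eq:wave_packet_0} the ansatz reproduces the tight-binding initial data exactly, so the corrector $\eta = \psi - \psi^{\mathrm{ans}}$ satisfies $i\hbar\partial_t\eta = H_\text{TB}\eta - \mathcal{F}$ with $\eta(0)=0$. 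Since $H_\text{TB}$ is self-adjoint and bounded (\cref{lem:bounded}), it generates a unitary group, and Duhamel's principle gives $\|\eta(t)\|_\mathcal{H} \le \int_0^t \|\mathcal{F}(s)\|_\mathcal{H}\,ds$. This already produces the linear-in-$t$ factor in \cref{eq:key_estimate} and reduces the theorem to a time-uniform bound on $\|\mathcal{F}(t)\|_\mathcal{H}$.

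Next I would split the residual into an intralayer and an interlayer contribution matching the block structure of $H_\text{BM}$ in \cref{eq:BM}. For the intralayer part, the idea is to Taylor-expand the monolayer tight-binding dispersion around the layer Dirac point $\vec K_i$. Because the envelope $f_{i,0}^\sigma(\vec r) = \epsilon g_{i,0}^\sigma(\epsilon\vec r)$ is concentrated in a momentum ball of radius $\sim\epsilon$ about $\vec K_i$, the linear term of the expansion reproduces the Dirac operator $v\vec\sigma\cdot(-i\nabla)$ exactly, while the quadratic and higher remainder of the band structure contributes an error of size $O(\epsilon^2)$. Controlling this remainder in the $\ell^2$ norm forces several derivatives onto the slowly varying envelope, which is precisely why the $H^8$ regularity \cref{eq:regularity} is assumed. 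The rotation of each layer by $\pm\theta/2$ tilts the two Dirac cones and misaligns the hopping geometry; accounting for this distortion yields the mixed $O(\theta\epsilon)$ term.

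The technical heart of the argument, and the step I expect to be the main obstacle, is the interlayer contribution, where one must pass from the atomistic interlayer hopping sum to the smooth moiré potential $w\sum_{n=1}^3 T_n e^{-i\vec s_n\cdot\vec r}$. The plan is to apply Poisson summation to the hopping sum and to exploit the rapid decay of $\hat h(\vec\xi;L)$ in \cref{eq:h_hat}: only the three momentum transfers of length $|\Delta\vec K|$, namely $\vec s_1,\vec s_2,\vec s_3$, survive at leading order, and these reconstruct the BM coupling together with the matrices $T_n$. The Fourier modes that are discarded are smaller by a further factor of $\mathfrak h$, producing the $O(\mathfrak h^{2-c_2})$ term, while evaluating $\hat h$ at $\vec K_i$ rather than at the true envelope-shifted momentum produces the $O(\epsilon\mathfrak h^{1-c_1})$ term; the arbitrarily small exponents $c_1,c_2>0$ appear because $\hat h$ decays only like a stretched exponential, so a clean power of $\mathfrak h$ can be extracted only up to a vanishing loss. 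The delicate point is that the bilayer is aperiodic, so the hopping sum is not moiré-periodic and Poisson summation cannot be applied naively; the resulting remainder must be controlled using the slow variation of the envelope on the moiré scale. This is exactly where the prefactor $\rho(\theta/\epsilon)$ enters, since the estimate degrades once the wave-packet width $\epsilon^{-1}$ becomes small relative to the moiré period $\sim\theta^{-1}$.

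Assembling the intralayer and interlayer bounds and integrating in time via Duhamel gives \cref{eq:key_estimate}. The second assertion \cref{eq:BM_error} then follows by direct substitution of the scaling \cref{eq:BM_regime}: setting $\mathfrak h = \lambda_0\epsilon$ and $\theta\le\lambda_1\epsilon$ turns each of the four terms into $O(\epsilon^{2-c})$ with $c = \max(c_1,c_2)$, and $\rho(\theta/\epsilon)\le\rho(\lambda_1)$ is uniformly bounded because $\theta/\epsilon\le\lambda_1$, which completes the proof.
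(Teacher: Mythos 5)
Your proposal follows essentially the same route as the paper: the same Duhamel/unitarity reduction to a residual bound, the same four-way decomposition of the residual (intralayer Taylor remainder at the Dirac point giving $O(\epsilon^2)$, layer-rotation error giving $O(\theta\epsilon)$, the local approximation of $\hat h$ near $\vec K$ giving $O(\epsilon\mathfrak h^{1-c_1})$, and beyond-nearest-neighbor momentum hops giving $O(\mathfrak h^{2-c_2})$), and the same substitution of the scaling regime at the end. The only small discrepancy is the origin of $\rho(\theta/\epsilon)$: the paper attributes it to the constants in the Sobolev-norm propagation estimates for the scaled BM evolution of $g$, rather than to the control of the aperiodic remainder in the interlayer step, but this does not affect the structure of the argument.
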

\begin{proof}[Proof (sketch)]
    The detailed proof of this theorem as well as derivation of the BM model can be found in \cite{Watson_Kong_MacDonald_Luskin_2023}. We recover physical units for the BM time-propagation in order to match the tight-binding dynamics introduced in \cref{ex:nearest_neighbor}. We sketch the main ideas of the proof to explain the origins of the leading order terms. The estimate on the corrector relies on the estimate on the residual $r$, 
\begin{equation}
\begin{gathered}
    \| \eta(t)\|_\mathcal{H} \leq \int_0^t \|r(t')\|_\mathcal{H} \dee t', \\
     r_{\vec R_i\sigma}(t) := \epsilon \left[ H_{\text{BM}} g^\sigma_i \right] ( \epsilon (\vec{R}_i + \vec{\tau}^\sigma_i),  \epsilon t )
    e^{i \vec{K}_i \cdot (\vec{R}_i + \vec{\tau}^\sigma_i)} - H_{\text{TB}}\psi_{\vec R_i\sigma}(t).
\end{gathered}
\end{equation}
    Here, $g_i^\sigma$ are the components of $g$ that satisfies $ f(\vec r, t) = \epsilon g(\epsilon \vec r, \epsilon t)$. It solves a scaled IVP \cref{eq:bm_dynamics}, and, in the regime \cref{eq:BM_regime}, the Sobolev norms of $g_i^\sigma$ can be bounded in terms of those of the scaled initial data $g_{i,0}^\sigma$ \cref{eq:regularity} independently of all parameters (the function $\rho$ appearing in \cref{eq:key_estimate} is the constant depending on $\frac{\theta}{\epsilon}$ which appears in these estimates). 

We can write the residual as a sum over four terms $r = r^\text{I} + r^\text{II} + r^\text{III} + r^\text{IV}$, each term represents an approximation error, whose leading order and higher order terms can be estimated.

The first two terms originate from the monolayer interactions. The term $r^\text{I}$ is the second order and higher terms of the Taylor expansion of the monolayer Hamiltonian at the Dirac point, which captures the dispersion of the wave-packet.
    \begin{equation} \label{eq:r_I}
        \left\| r^{\text{I}}(t) \right\|_{\mathcal{H}} \lesssim \epsilon^2 \left( \sup_{i, \sigma}\left\| g_i^\sigma(\cdot,\epsilon t) \right\|_{H^2(\mathbb{R}^2)} + \epsilon^2 \sup_{i, \sigma}\left\| g_i^\sigma(\cdot,\epsilon t) \right\|_{H^6(\mathbb{R}^2)} \right).
    \end{equation}
The term $r^\text{II}$ is the result of using the untwisted Dirac operator on monolayers on rotated layers, therefore the error is dependent on the rotation angle $\theta$.
        \begin{equation} \label{eq:r_II}
            \left\| r^{\text{II}}(t) \right\|_{\mathcal{H}} \lesssim \theta \epsilon \left( \sup_{i, \sigma}\left\| g_i^\sigma(\cdot,\epsilon t) \right\|_{H^1(\mathbb{R}^2)} + \epsilon^2 \sup_{i, \sigma}\left\| g_i^\sigma(\cdot,\epsilon t) \right\|_{H^5(\mathbb{R}^2)} \right).
    \end{equation}

The next two terms originate from interlayer hopping. The term $r^\text{III}$ measures the ``local'' approximation \cite{Xie_MacDonald_2021}. It is the interaction of the wave-packet and the remainder of the Taylor expansion of $\hat h$ around the $\vec K$ point. For any $\mu_1 > 0$, we have the estimate
  \begin{equation} \label{eq:r_III}
    \begin{split}
        &\left\| {r}^\text{III}(t) \right\|_{\mathcal{H}} \lesssim   \\
        &\quad \epsilon \mathfrak{h} \left(  e^{\mu_1 L|\vec K|} \sup_{i, \sigma}\left\| g_i^\sigma(\cdot,\epsilon t) \right\|_{H^1(\mathbb{R}^2)} + \epsilon^{3/2} e^{L|\vec K|}\sup_{i, \sigma}\left\| g_i^\sigma(\cdot,\epsilon t) \right\|_{H^4(\mathbb{R}^2)} \right.  \\
        &\left. \vphantom{e^{\mu_1 L|\vec K|}}  \quad + \epsilon^2 e^{\mu_1 L|\vec K|}\sup_{i, \sigma}\left\| g_i^\sigma(\cdot,\epsilon t) \right\|_{H^5(\mathbb{R}^2)} +  \epsilon^{7/2} e^{L|\vec K|} \sup_{i, \sigma}\left\| g_i^\sigma(\cdot,\epsilon t) \right\|_{H^8(\mathbb{R}^2)} \right).
    \end{split}
    \end{equation}
Lastly, $r^\text{IV}$ captures the effect of hopping beyond nearest-neighbor in momentum space, which is excluded in the approximation. For any $\mu_2 > 0$, we have
    \begin{equation} \label{eq:r_IV}
    \begin{split}
        &\left\| {r}^{\text{IV}}(t) \right\|_{\ell^2(\mathbb{Z}^2;\mathbb{C}^2)} \lesssim  \\
        &\quad \mathfrak{h}^2 \left( e^{\mu_2 L|\vec K| } \sup_{i, \sigma}\left\| g_i^\sigma(\cdot,\epsilon t) \right\|_{L^2(\mathbb{R}^2)} + \epsilon^2 e^{\mu_2 L|\vec K| } \sup_{i, \sigma}\left\| g_i^\sigma(\cdot,\epsilon t) \right\|_{H^4(\mathbb{R}^2)} \right).
    \end{split}
\end{equation}

Recall that the Fourier transform of $h$ in \cref{eq:h_hat} gives
\begin{equation}
e^{-(1+\nu) L |\vec \xi|} \lesssim \hat h(\vec \xi; L) \lesssim e^{-L |\vec \xi|}, \quad \nu > 0,
\end{equation}
then we can rewrite the dependence on $L$ as dependence on $\hat h$, therefore on $\mathfrak{h}$. The leading order term estimate of the residual follows from the assumption that the Sobolev norms of $g_{i}^\sigma$ are bounded.  

Under the parameter regime given in \cref{eq:BM_regime}, the leading order term from 
\cref{eq:r_I,eq:r_II,eq:r_III,eq:r_IV} all balance. The overall error is $O(\epsilon^{2-c}t)$ for any $c>0$.
\end{proof}

\begin{remark}
The physical meanings of the fundamental parameters are as follows: $\theta$ is the twist angle in radians, $\mathfrak{h}$ is the ratio between interlayer hopping and intralayer hopping, and  $\epsilon$ separates the length scale of the wave-packet envelope and the plane wave parts (or equivalently it controls the concentration of wave-packets in momentum space). 

    Under the assumption that these parameters scale linearly, the error is at most $O(\epsilon^2 t)$. The scaling can also be stated in terms of length scales. Noting that the interlayer distance $L$ is related to the hopping function through the rough estimate $\hat h(\vec K; L) \sim e^{-|\vec K|L}$ , we can write the scaling rules as
\begin{equation}
    \frac{1}{\epsilon} \sim \frac{1}{\theta},\quad L \sim \ln{\frac{1}{\epsilon}},
\end{equation}
where $1/\epsilon$ is the length scale of the wave-packet envelope, and $1/\theta$ is the length scale of the moir\'e  lattice.
\end{remark}

\section{Numerical Simulations} 

\subsection{Convergence of domain truncation} \label{sec:num_trunc}
We test our error estimates using the nearest-neighbor tight-binding model $H_\text{TB}$ defined in \cref{ex:nearest_neighbor}. We use this specific model so we can compare its dynamics to that produced by the BM model directly. The matrix exponential for solving the truncated tight-binding model is calculated through Pad\'e approximation.

We can use the following set of parameters to  match the physical measurements of the monolayer graphene $\pi$-band energy, $v = 6.6 \text{ eV\AA},$ and the interlayer hopping energy, $w = 0.11 \text{ eV},$  given in the BM model in \cite{Bistritzer_MacDonald_2011}:

 \begin{equation} \label{eq:exp_regime}
         \theta = 1.05^\circ, \, a = 2.5\text{ \AA}, \, L = 3.5\text{ \AA}, \, t_0 = 3.048\text{ eV}, \, h_0 = 83.135\text{ eV}, \, \alpha_0 = 1\text{ \AA}^{-1}.
 \end{equation}
We choose $\theta$ at the magic angle, which is usually the most interesting angle for experiments.

To satisfy \cref{as:intial_condition}, we choose a normalized initial condition $\|\psi_0\|_\mathcal{H} = 1$. We will introduce the detailed construction of such initial conditions in \cref{sec:num_BM} that ensures exponential decay.  We set the truncation radius for the initial condition to be $r=10$. We compute the truncation error for a range of truncation radius $R$ and time $t$ in \cref{fig:trunc_R}.

\begin{figure}[ht]
    \label{fig:trunc_R}
    \centering
    \includegraphics[width=.55\textwidth]{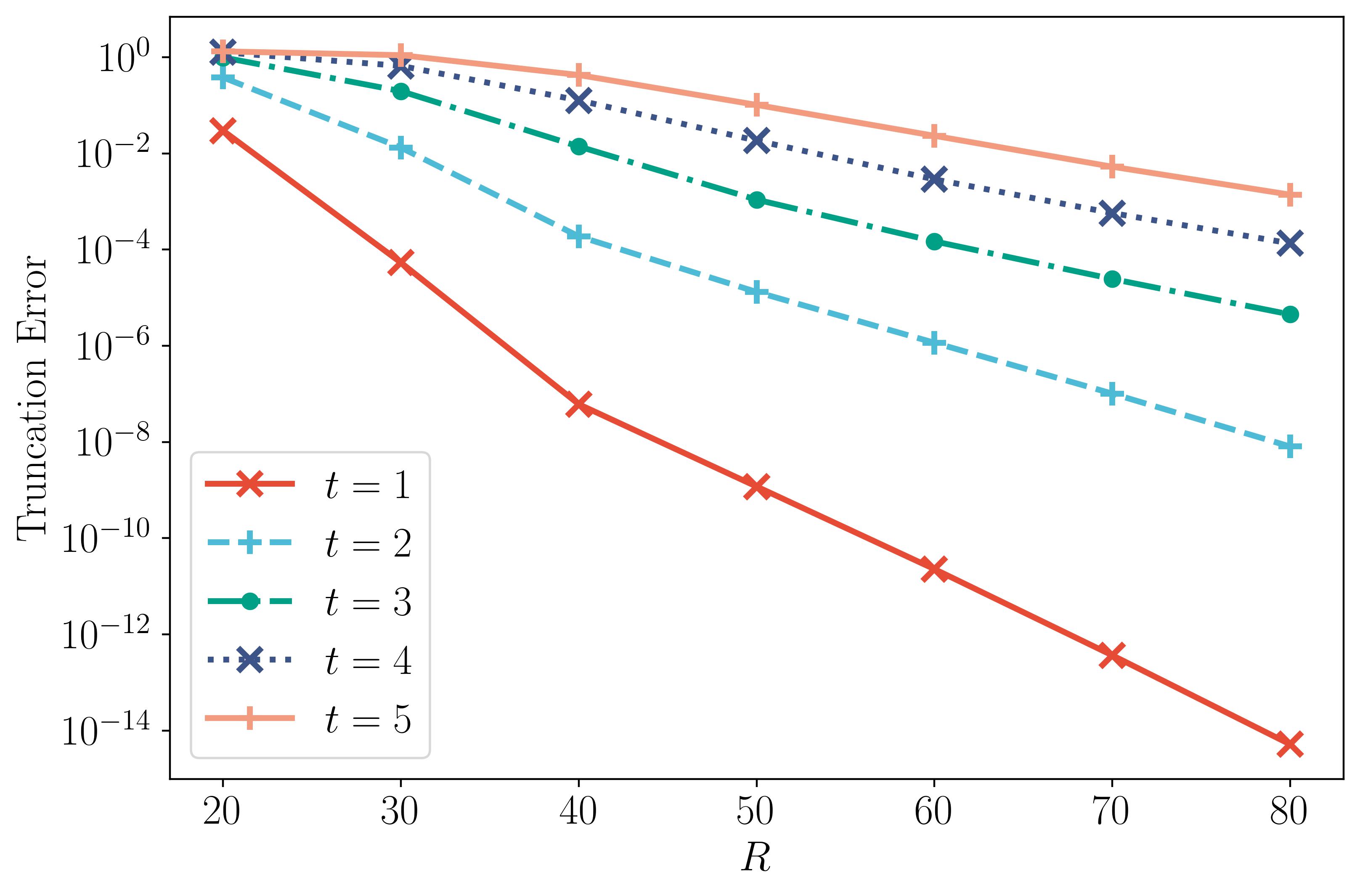}
    \caption{The relative error for the truncated tight-binding model in TBG. Each data point is the average of several simulations with different initial conditions while keeping their norm constant. As it is not possible to compute the infinite system $\psi(t)$ directly, we compare the solutions $P_R^*\Psi(t)$ to $P_{R'}^*\Psi(t)$, where the reference value is $R' = 86.60 \text{ \AA}$.
    When the truncation radius $R$ increases, $P_R^*\Psi(t)$ converges  exponentially.} 
\end{figure}

\subsection{Comparing tight-binding model to BM model}\label{sec:num_BM}
Since we have established the convergence of domain truncation for the tight-binding Hamiltonian, we can now compare the solutions of the truncated tight-binding model to the BM model. We prepare the initial condition according to \cref{thm:BM}, by first finding a suitable envelope function $f_0(\vec r) = \left(f_{1,0}^A(\vec r), f_{1,0}^B(\vec r), f_{2,0}^A(\vec r), f_{2,0}^B(\vec r)\right)^\top$ with bounded Sobolev norms. For a general Gaussian wave-packet envelope function, we can choose 
\begin{equation}\label{eq:gaussian_initial}
    f_0(\vec r) = \left(c_1^A, c_1^B, c_2^A, c_2^B\right)^\top G(\vec r), \quad G(\vec r) := e^{-\frac{|\vec r|^2}{2\sigma_r^2}},
\end{equation}
where $c_i^\sigma \in \mathbb C$ are the normalization coefficients for each component of the function. By setting $\sigma_r = \epsilon^{-1}$, we can control the wave-packet envelope length scale.

We can also utilize the band structure of the BM Hamiltonian to generate wave-packets with wavenumbers  concentrated in momentum space on any selected band. For any $\vec k_i \in \Gamma_m^*$, the energy on the $n$-th band is the $n$-th eigenvalue $E_n(\vec k_i)$ from \cref{eq:BM_eigen}, with the corresponding eigenfunction
$\Phi_n(\vec r; \vec k_i)$. We can set the initial wave-packet initial condition as the product of the eigenfunction and a two-dimensional Gaussian function
\begin{equation}\label{eq:k_initial}
    f_0(\vec r) = c\cdot \Phi_n(\vec r; \vec k_i)\cdot G(\vec r),
\end{equation}
where $c$ is the overall normalization coefficient, and $G$ is defined as in \cref{eq:gaussian_initial}. The group velocity for the wave-packet envelope is $\nabla_{\vec k} E_n(\vec k_i)$ \cite{Ashcroft_Mermin_1976}. A visual representation of the BM bands and two $\vec k_i$ points is presented in \cref{fig:BM_band_structure}.
\begin{figure}[ht] \label{fig:BM_band_structure}
    \centering
\includegraphics[width=.45\textwidth]{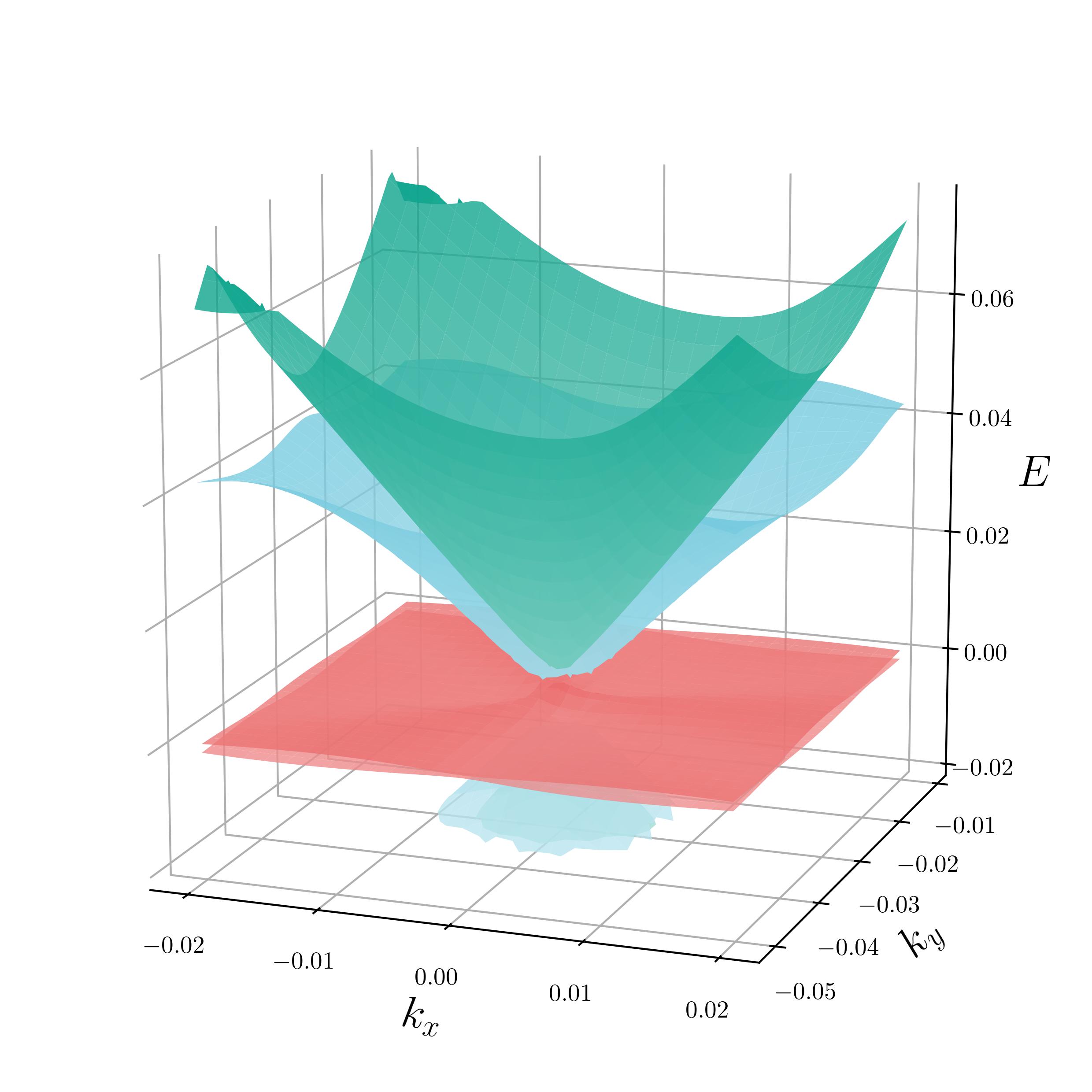}
\includegraphics[width=.35\textwidth]{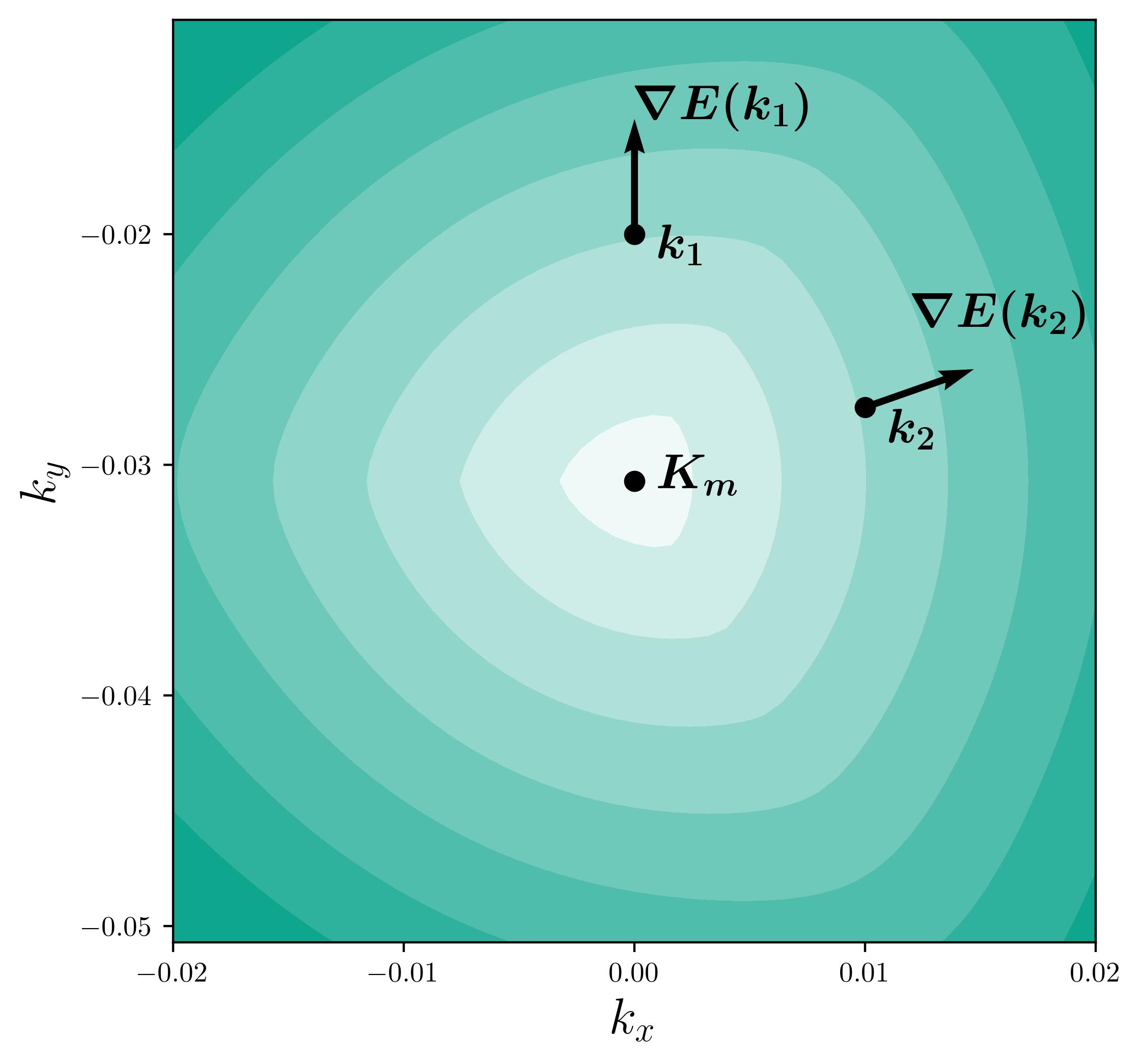}
\caption{The BM band structure of TBG at twist angle $\theta = 1.05 ^\circ$. $1.05^\circ$ is called a magic angle for TBG because there are a pair of almost flat moir\'e bands in the TBG band structure. Left: The flat bands (red) as well as two bands above the flat bands (blue and green) around a moir\'e $\vec K_m$ point. Right: The contour plot of the third band with two points $\vec k_1 = (0 , -0.02)^\top$ and $\vec k_2 = (0.01, -0.0275 )^\top$ in $k$-space, and an illustration of the gradient of energy $\vec\nabla E(\vec k_i)$ at these points.}
\end{figure}

For the tight-binding model, we use $f_0$ to generate the wave-packet initial condition $\psi_0$ through \cref{eq:wave_packet_0}. For suitable $R > r > 0$ we define the truncated initial condition as $\Psi_0 := P_R\mathcal{X}_{B_r}\psi_0$, and let it evolve according to \cref{eq:tb_dynamics_trunc}. We denote the solution to the truncated tight-binding model as $\Psi_{\text{TB}}(t)$.

For the BM model, we use the same initial condition $f_0$ to solve \cref{eq:bm_dynamics}, and the solution is $f(\vec r, T)$.  The solution is then mapped to a wave-function $\psi_{\text{BM}}(t)$ using \cref{eq:wave_packet_t}. Finally we project the wave-function to the truncated domain by letting $\Psi_{\text{BM}} := P_R \psi_{\text{BM}}$, so that we can compare it to the truncated tight-binding solution directly.

\cref{thm:trunc_main} and \cref{thm:BM} give a bound for the error between the truncated tight-binding model and the BM model
\begin{equation}
\left\|\Psi_{\text{BM}}(t) - \Psi_{\text{TB}}(t) \right\|_{\ell^2(\Omega_R)} \leq \| \zeta(t)\|_\mathcal{H} + \| \eta(t)\|_\mathcal{H}.
\end{equation}
From the convergence of the truncation error, for any $T>0$ we can choose the truncation radius $R$ sufficiently large so that $\|\eta(t)\|_\mathcal{H}$ is the dominant term for $t \leq T$. In this way, we are able to study the approximation error through the finite domain error $\left\|\Psi_{\text{BM}}(t) - \Psi_{\text{TB}}(t) \right\|_{\ell^2(\Omega_R)}$. For all numerical experiments, we use the truncated tight-binding model with $R = 86.60$ and $r = 10$. 

We present the approximation error for BM model with parameters  $\mathfrak{h} = aw/v \approx 0.042$, $\theta = 1.05^\circ \approx 0.017 \text{ rad}$, and $\epsilon = 0.1$. The $\epsilon$ value is carefully chosen such that the width of wave-packet envelope function is as large as possible while still contained in the truncated domain.  In \cref{fig:TB_BM_error}, we present the dynamics for wave-packets concentrated at $\vec k_1 = (0 , -0.02)^\top$ and $\vec k_2 = (0.01, -0.0275 )^\top$ on the third band. In \cref{fig:BM_flat_band}, we present the approximately zero group velocity results for a wave-packet initial condition concentrated at $\vec K_m$ point on the top flat band. We generate these initial conditions using \cref{eq:k_initial}.

\begin{figure}[ht] \label{fig:TB_BM_error}
    \centering
\includegraphics[width=.49\textwidth]{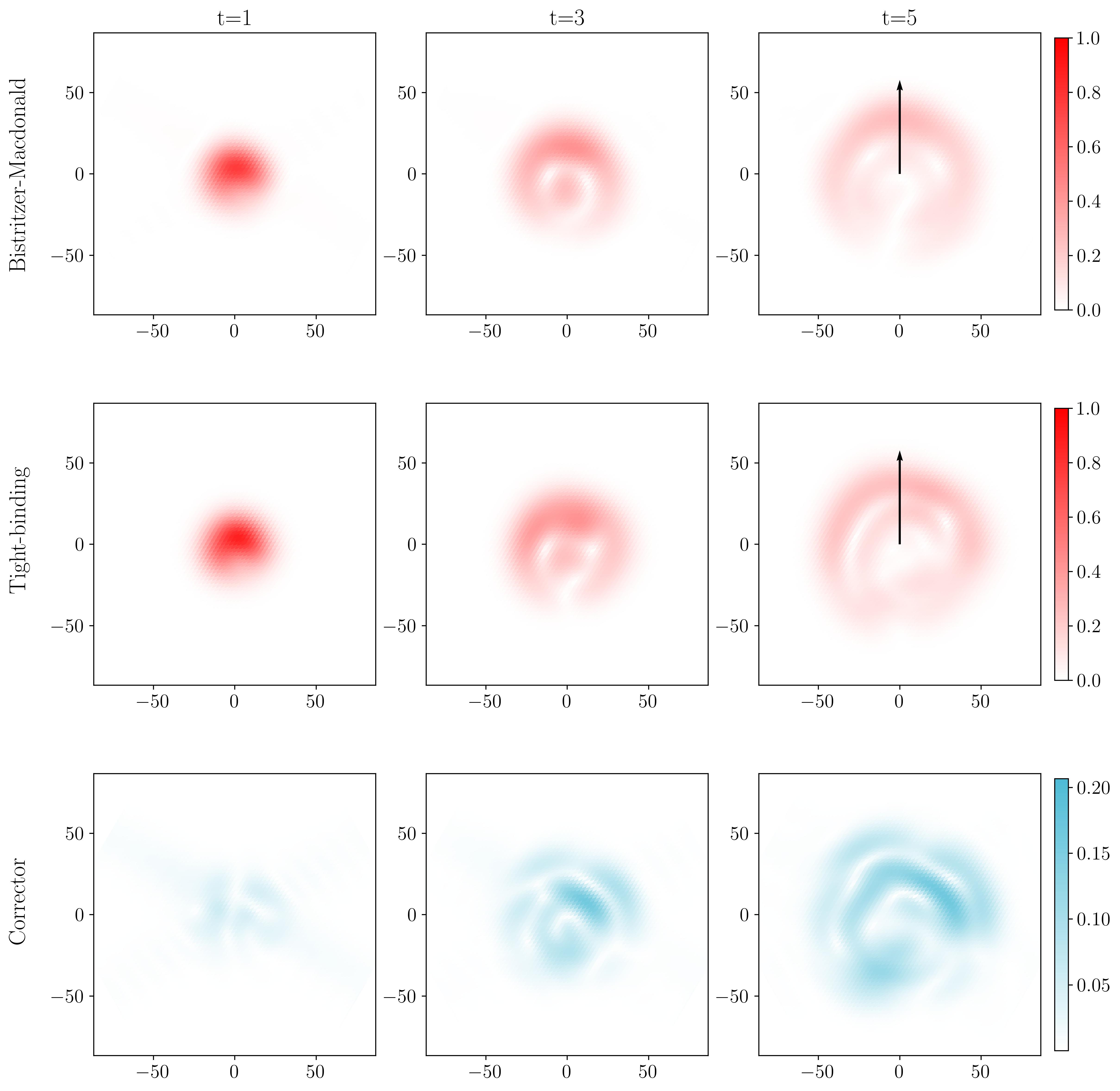}
\includegraphics[width=.49\textwidth]{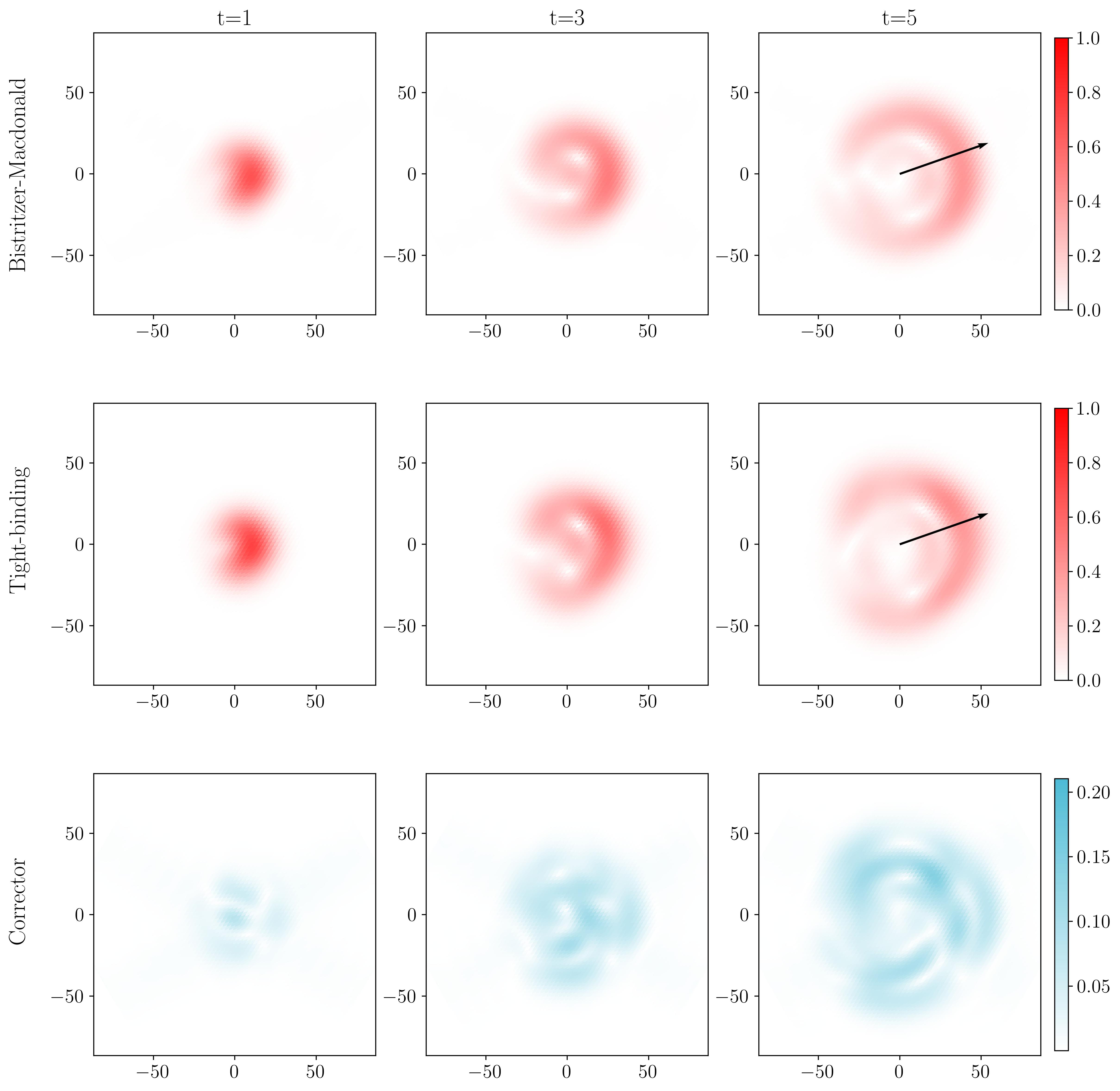}
    \caption{
    Left: The modulus of the wave-function for the BM model, the tight-binding model and the corrector for a wave-packet initial condition concentrated at $k_1$. Only one layer is presented, as the two layers have similar behaviour. The arrow represents the direction of $\vec\nabla E(\vec k_i)$.  Right: The same figure for a wave-packet concentrated at $k_2$. Recovering physical units, the axes have units $\AA$, and $t=T$ represents time at $ T \cdot\hbar \cdot\text{eV}^{-1} \approx T \times 6.6 \times 10^{-16} \text{s}$.
    }
\end{figure}

\begin{figure}[ht]
    \label{fig:BM_flat_band}
    \centering
    \includegraphics[width=.49\textwidth]{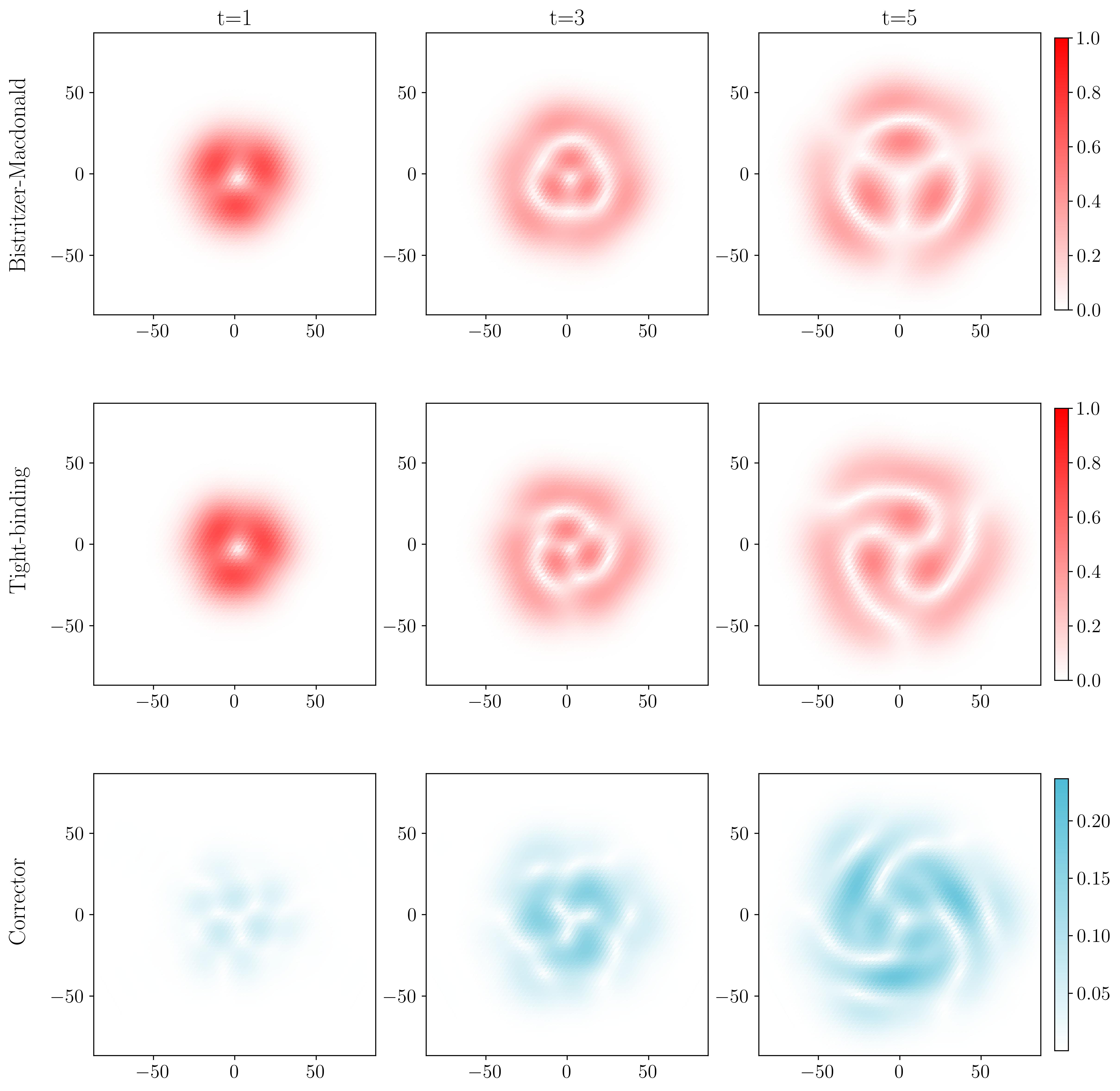}
        \includegraphics[width=.49\textwidth]{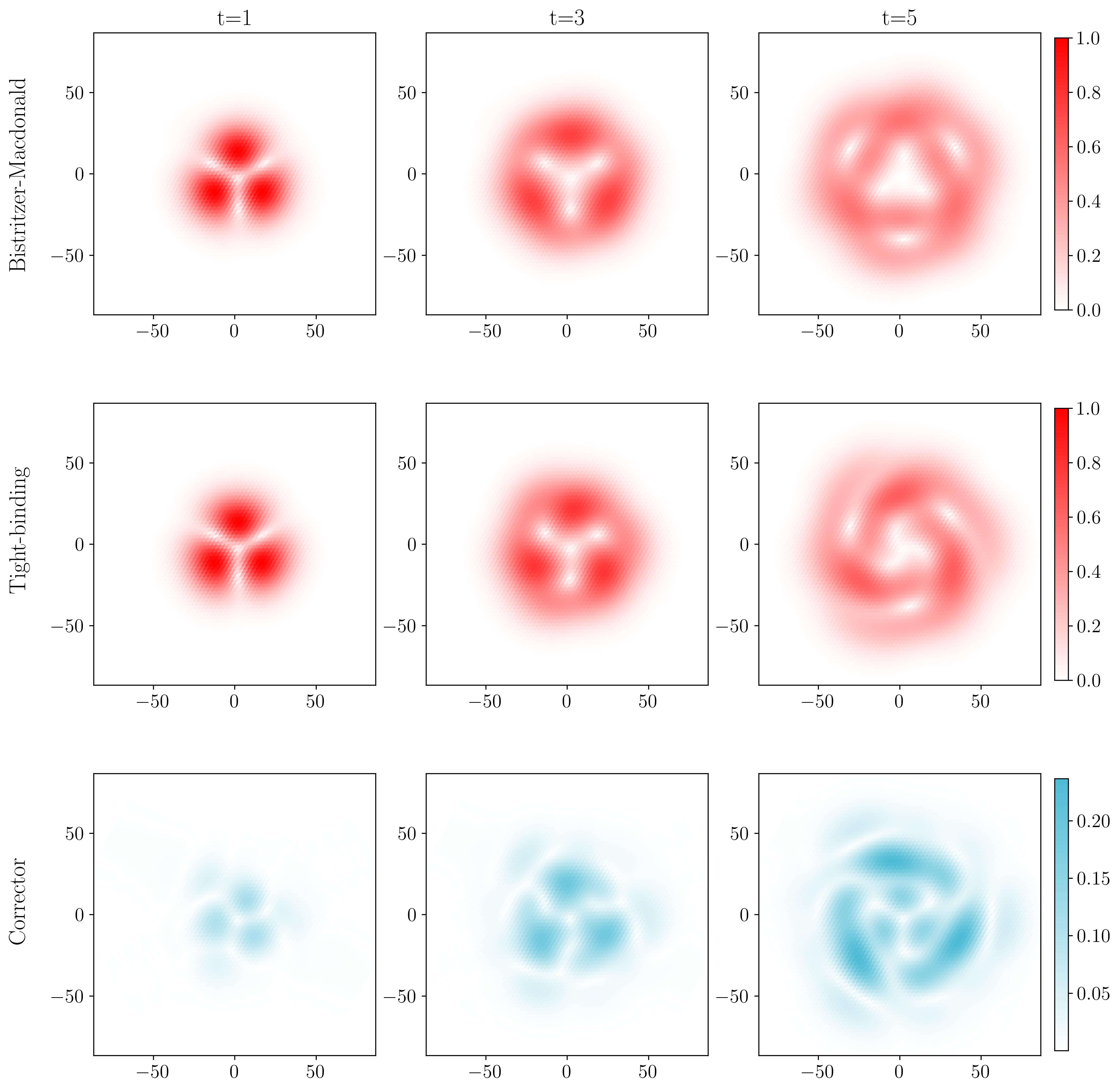}
    \caption{The dynamics of a wave-packet concentrated on a flat band at the degenerate point $\vec K_m$ of TBG. Both layer 1 (Left) and layer 2 (Right) are presented. We observe the approximately zero group velocity on both layers. The tight-binding dynamics show a twist-angle dependent rotation not seen in the BM dynamics.}
\end{figure}

\subsection{Sensitivity of parameters for the BM approximation}\label{sec:num_BM_beyond}
In this section, we discuss numerical experiments probing the sharpness of the estimates of Theorem \ref{thm:BM}. To do this, we numerically compute the approximation error $\| \eta(t)\|_\mathcal{H}$ between the BM model and tight-binding model as a function of time $t$ and the model parameters $\epsilon, \theta$, and $\mathfrak{h}$. We are interested, first, in whether the error scales like $\epsilon^2 t$ when the model parameters $\theta, \mathfrak{h}$ are scaled linearly with $\epsilon$ according to \eqref{eq:BM_regime}. We are also interested in whether the error scales like \eqref{eq:key_estimate} when we vary each of the model parameters $\epsilon, \theta, \mathfrak{h}$ individually while holding $t$ and the other model parameters fixed. Note that $\epsilon$ and $\theta$ can easily be varied individually because they appear in the BM model and tight-binding model directly. To vary $\mathfrak{h}$, we can change $h_0$ in the interlayer hopping function for the tight-binding model, and $w$ in the BM model.

To probe the error in the regime \eqref{eq:BM_regime} where $\theta, \mathfrak{h}$ scale linearly with $\epsilon$, we compute $\| \eta(t) \|_\mathcal{H}$ for various values of $t$ and $\epsilon$, while choosing $\mathfrak{h} = \lambda_0 \epsilon$ with $\lambda_0 = 0.42$ and $\theta = \lambda_1 \epsilon$ with $\lambda_1 = 0.17$. We make these choices of $\lambda_0, \lambda_1$ to be consistent with the estimated experimental values of each parameter in \cref{eq:exp_regime}. The results are shown in Figure \ref{fig:BM_epsilon_error}. We find that as we vary $t$ at fixed $\epsilon$, the slope of a linear fit to the error is $.92$, showing that the error is approximately linear in $t$ as expected. We find that as we vary $\epsilon$ at fixed $t$, the slope is $1.87$, showing that the error is approximately quadratic in $\epsilon$ as expected.

\begin{figure}[ht]
\label{fig:BM_epsilon_error}
    \centering
    \includegraphics[height=3.5cm]{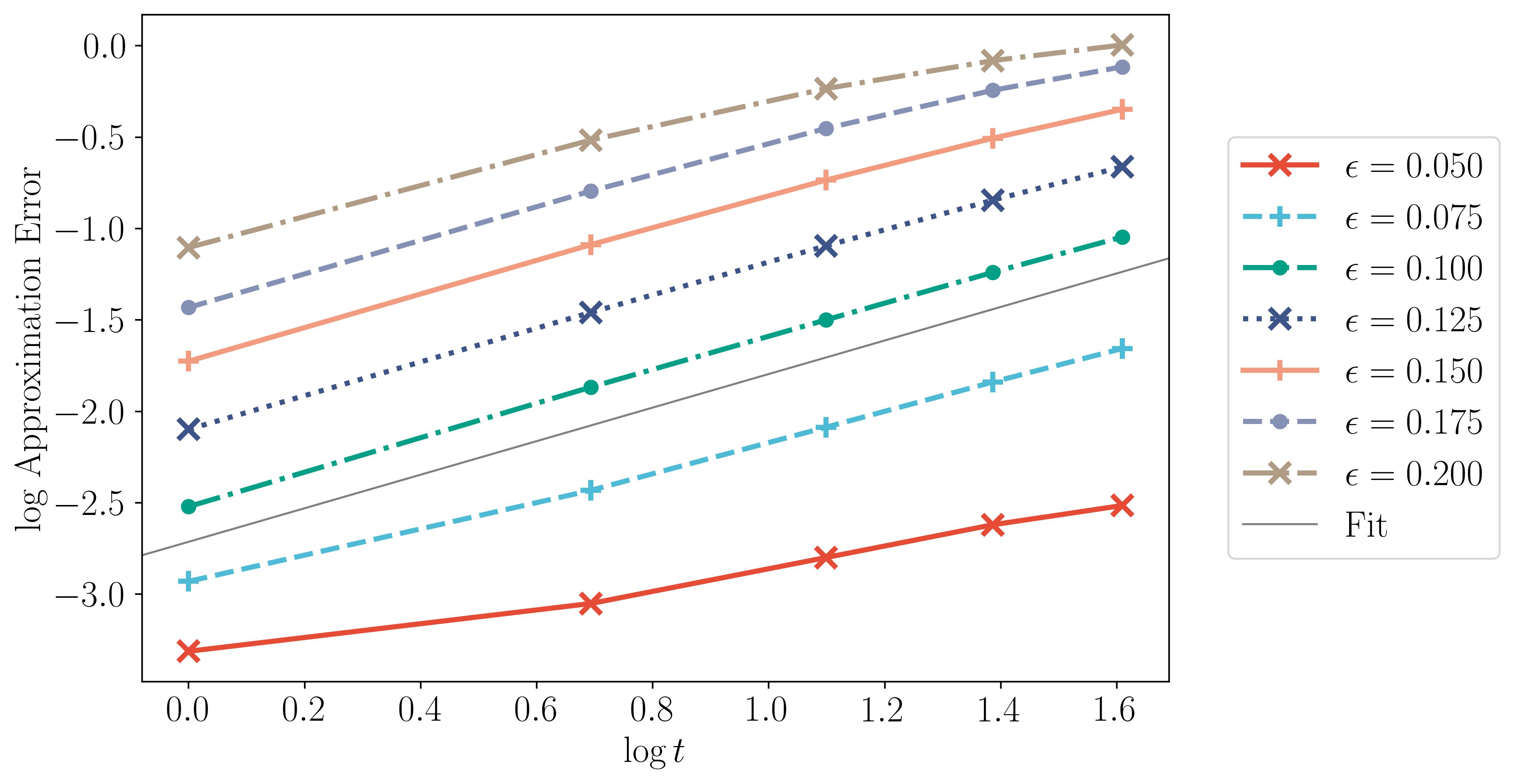}
    \includegraphics[height=3.5cm]{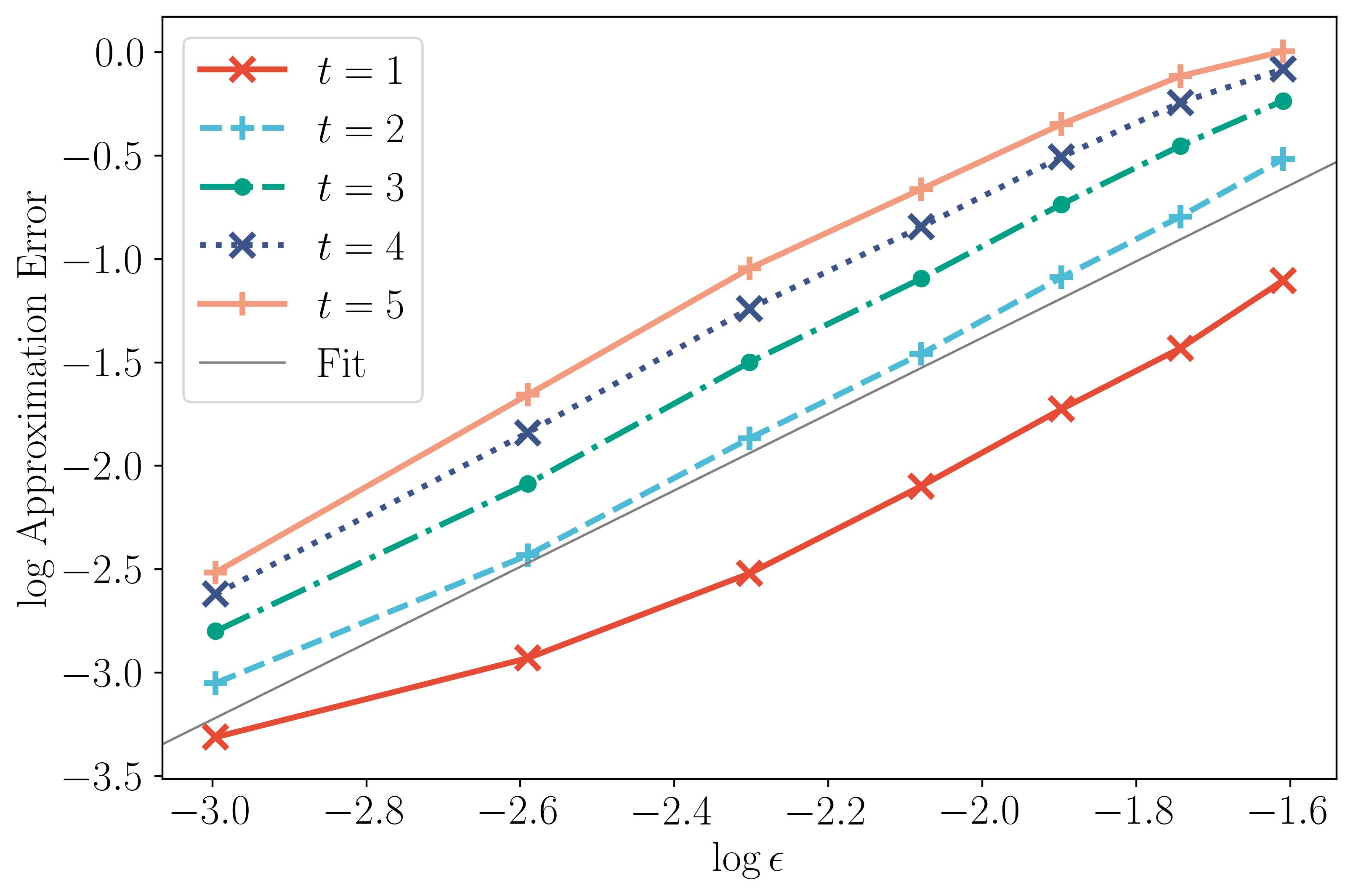}
    \caption{The approximation error between BM and tight-binding dynamics when the parameters $\epsilon$, $\theta$ and $\mathfrak{h}$ are in the regime given by \cref{eq:BM_regime}, presented in $\log-\log$ scale. Left: The error as a function of $t$ for various $\epsilon$. The slope of the linear fit is 0.92.
   Right: The same error data, but as a function of $\epsilon$. The slope of the linear fit is 1.87. These two figures verify that in the regime \cref{eq:BM_regime}, the BM approximation error is approximately $O(\epsilon^2 t)$, predicted in \cref{eq:BM_error}.
     } 
\end{figure}

To probe the dependence of the error on each model parameter individually, we again computed $\| \eta(t) \|_\mathcal{H}$, while varying each the model parameter while holding $t$ and the other model parameters fixed. When we increased $\mathfrak{h}$, while holding $\theta = 1.05^\circ$ and $\epsilon = 0.1$ fixed, for a range of $t$ values, we found the linear fit slope to be $0.72$. This is consistent with estimate \cref{eq:key_estimate}, according to which the dependence should be linear. When we increased $\epsilon$, while holding $\mathfrak{h} = 0.042$, $\theta = 1.05^\circ$, we found the linear fit slope to be $1.36$. This is consistent with estimate \cref{eq:key_estimate}, according to which the dependence should be quadratic. When we increased $\theta$, while keeping $\epsilon = 0.1$ and $\mathfrak{h} = 0.042$, the linear fit slope was only $0.08$. This surprising result suggests that the error is essentially independent of the twist angle up to $5^\circ$. This does not contradict estimate \cref{eq:key_estimate}, but suggests that it is not sharp. We aim to provide an analytical explanation of this phenomenon in future work.

\begin{figure}[ht]
\label{fig:BM_param_error}
    \centering
    \includegraphics[height=3.5cm]{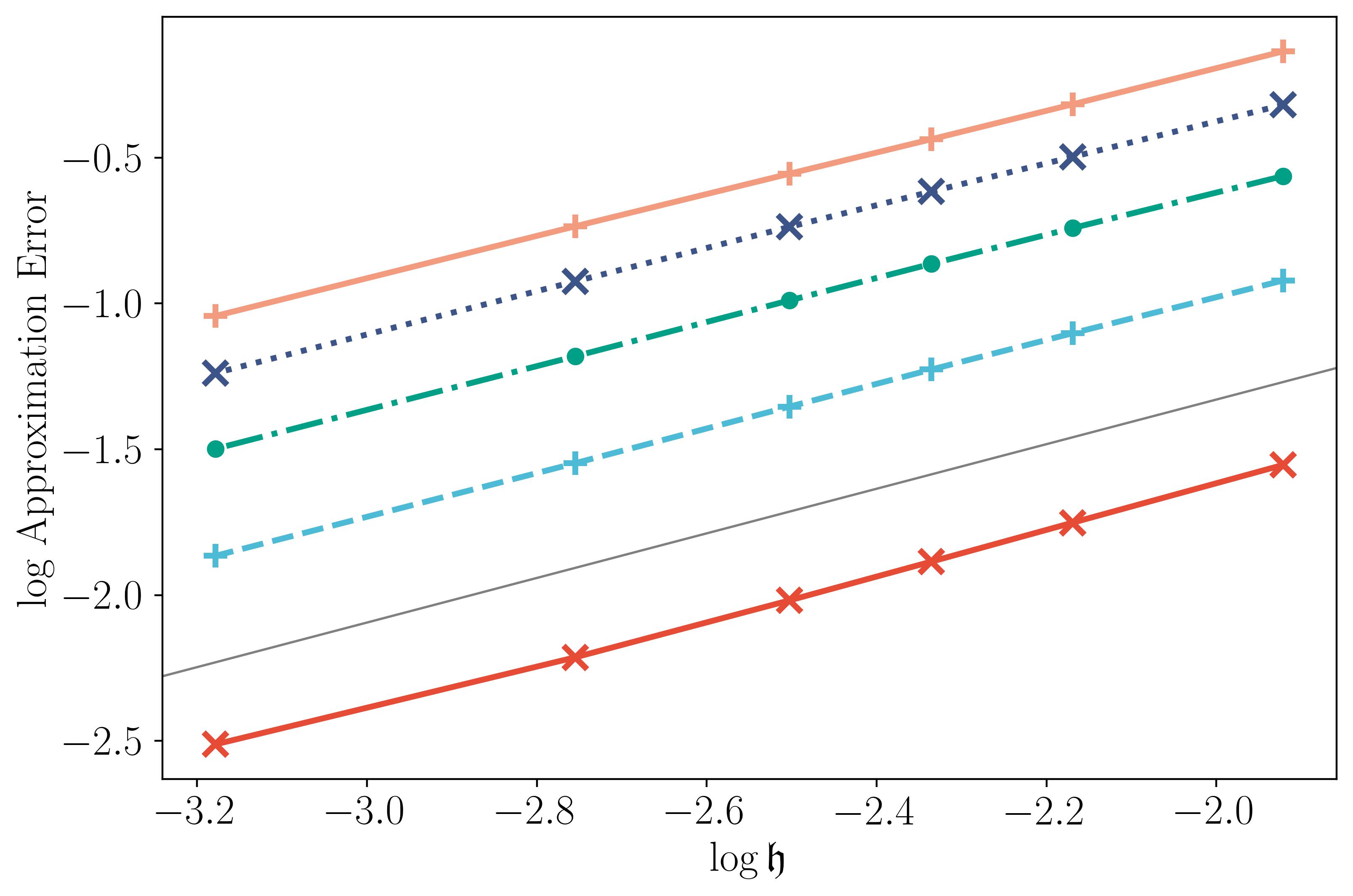}
    \includegraphics[height=3.5cm]{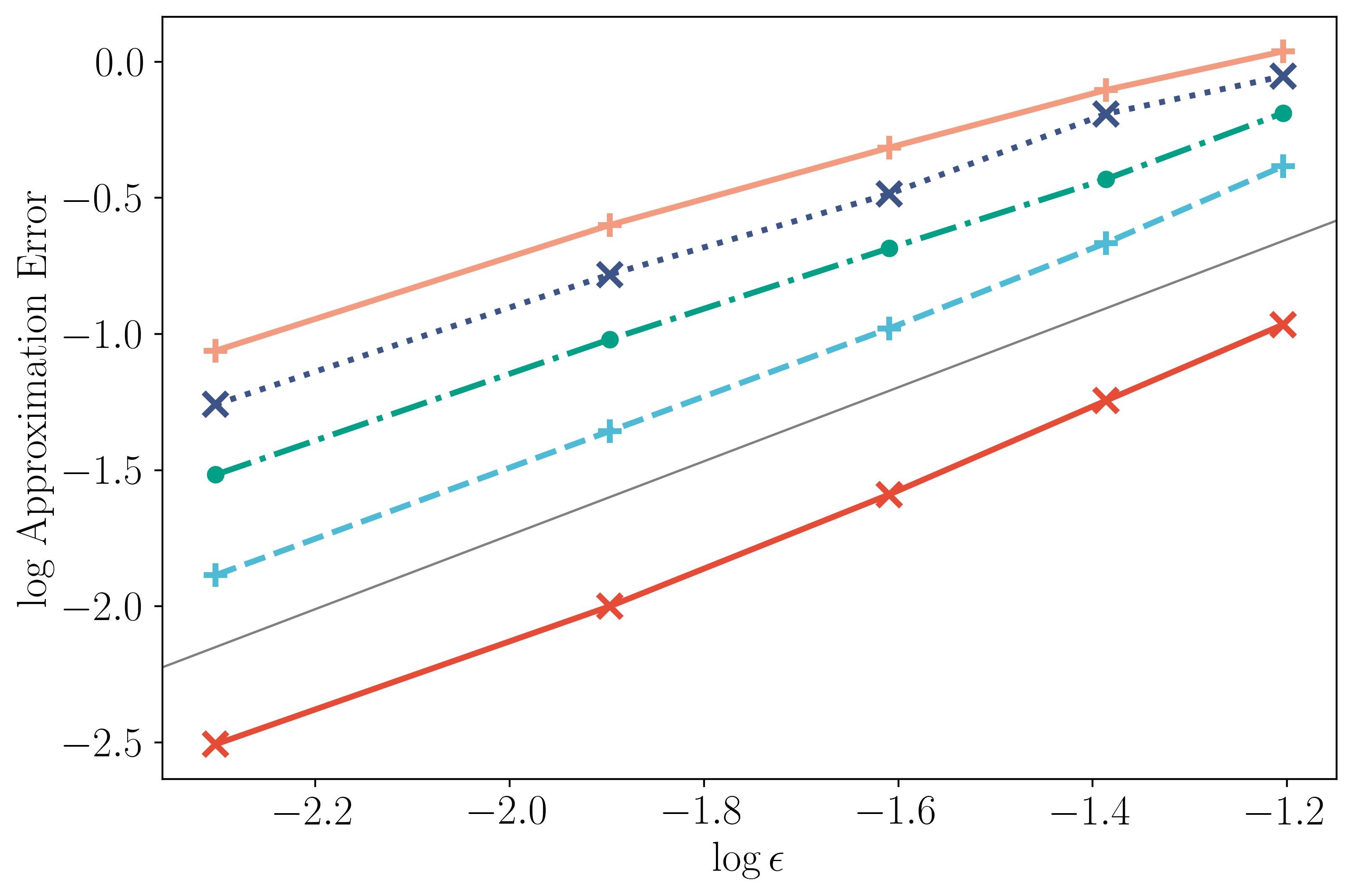}
\includegraphics[height=3.5cm]{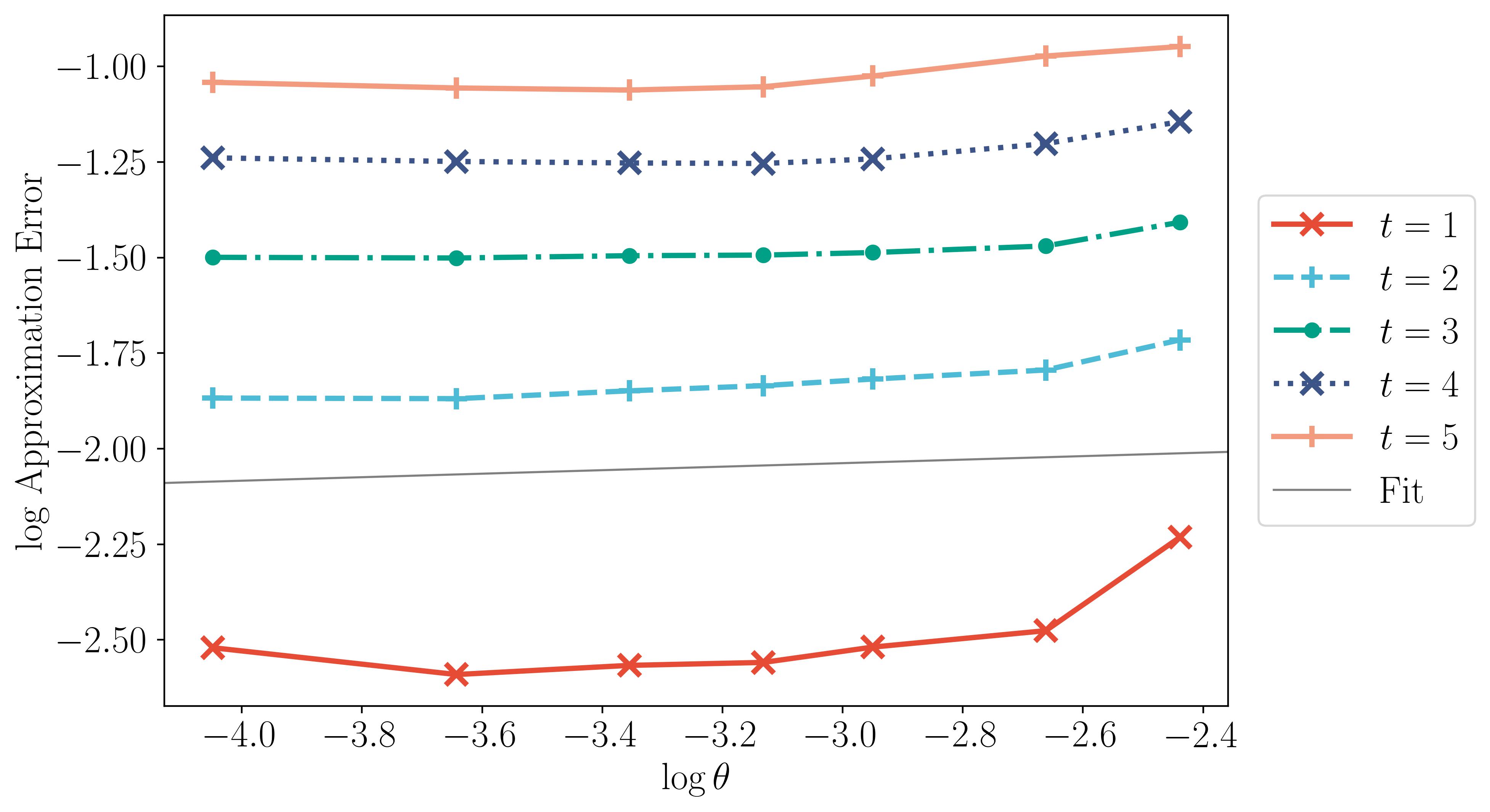}
   \caption{The approximation error between the BM and tight-binding models when one of the variables $\mathfrak{h}, \epsilon,\theta$ is changed, and the other two are constant. Each data point is the average of several numerical simulations with the same $\|\psi_0\|_\mathcal{H}$. The plots are presented in $\log-\log$ scale, and a linear fit is included to show the power relation on the parameters.
    Left: The error of increasing $\mathfrak{h}$, while keeping $\theta = 1.05^\circ$ and $\epsilon = 0.1$. The slope of the linear fit is 0.72.
   Right: The error of increasing $\epsilon$, while keeping $\mathfrak{h} = 0.042$ and $\theta = 1.05^\circ$. The slope of the linear fit is 1.36.
   Bottom: The error of increasing $\theta$, while keeping $\epsilon = 0.1$ and $\mathfrak{h} = 0.042$. The slope of the linear fit is 0.08, suggesting the approximation error does not depend on the twist angle $\theta$ up to $5 ^\circ$. } 
\end{figure}

\section{Conclusion}

In this paper, we considered the tight-binding model of an electron in twisted bilayer graphene. The accurate numerical computation of time evolved solutions of an electron is made challenging by the fact that the model is infinite dimensional and aperiodic at generic twist angles. We proposed approximating the dynamics by computations on finite domains. Using a speed of propagation estimate, we proved that the finite domain truncation error decays exponentially as the truncation radius increases.

Using this numerical method, we numerically investigated the range of validity of the effective PDE model of twisted bilayer graphene known as the Bistritzer-MacDonald model. We verified that in the regime \cref{eq:parameter_regime}, the Bistritzer-MacDonald approximation correctly captures the group velocity of spectrally concentrated wave-packet solutions of the tight-binding model. In particular, at the magic angle, we find wave-packet solutions with essentially zero group velocity \cite{Bistritzer_MacDonald_2011}. We also find that the main estimate on the approximation error \cref{eq:parameter_regime} from \cite{Watson_Kong_MacDonald_Luskin_2023} appears to be sharp, in the sense that it correctly captures the asymptotic dependence of the error on the parameter $\epsilon$ and time $t$. 

When we allow model parameters to vary independently, i.e., when we leave regime \cref{eq:parameter_regime}, we find that the more general estimate \cref{eq:main_estimate} usually captures the correct scaling of the error as a function of each parameter.  
The exception is when we vary the twist angle $\theta$ independently of other parameters. In this case, we find that the error grows very little when the twist angle is increased. This suggests that the Bistritzer-MacDonald approximation remains accurate even for a larger range of twist angles than predicted by the estimates of \cite{Watson_Kong_MacDonald_Luskin_2023}, as long as other model parameters are controlled.

In future work, we aim to provide an analytical explanation for the expanded range of validity of the Bistritzer-MacDonald model found here. We will also investigate efficient numerical methods for computing dynamics in incommensurate heterostructures along the lines of \cite{Wang_Chen_Zhou_Zhou_2021}, and numerically investigate the ranges of validity of ``corrected'' Bistritzer-MacDonald models which have appeared in the physics literature, e.g. \cite{VafekKang2023,PhysRevResearch.1.013001,Balents2019}.

\bibliographystyle{siamplain}
\bibliography{references}

\begin{thebibliography}{10}

\bibitem{Ashcroft_Mermin_1976}
{\sc N.~W. Ashcroft and N.~D. Mermin}, {\em Solid State Physics}, Saunders
  College, 1976.

\bibitem{Balents2019}
{\sc L.~Balents}, {\em {General continuum model for twisted bilayer graphene
  and arbitrary smooth deformations}}, SciPost Phys., 7 (2019), p.~048,
  \url{https://doi.org/10.21468/SciPostPhys.7.4.048},
  \url{https://scipost.org/10.21468/SciPostPhys.7.4.048}.

\bibitem{Bateman1954a}
{\sc H.~Bateman}, {\em {Tables of integral transforms Volume 2}}, McGraw-Hill
  Book Company, New York, 1954.

\bibitem{Bistritzer_MacDonald_2011}
{\sc R.~Bistritzer and A.~H. MacDonald}, {\em Moiré bands in twisted
  double-layer graphene}, Proceedings of the National Academy of Sciences, 108
  (2011), p.~12233–12237, \url{https://doi.org/10.1073/pnas.1108174108}.

\bibitem{Cances_Garrigue_Gontier_2023}
{\sc E.~Canc{\`e}s, L.~Garrigue, and D.~Gontier}, {\em Simple derivation of
  moir{\'e}-scale continuous models for twisted bilayer graphene}, Physical
  Review B, 107 (2023), p.~155403,
  \url{https://doi.org/10.1103/PhysRevB.107.155403}.

\bibitem{cances2021secondorder}
{\sc E.~Cancès, L.~Garrigue, and D.~Gontier}, {\em Second-order homogenization
  of periodic {S}chr\"odinger operators with highly oscillating potentials},
  2021, \url{https://arxiv.org/abs/2112.12008}.

\bibitem{Cao_Fatemi_Demir_Fang_Tomarken_Luo_Sanchez-Yamagishi_Watanabe_Taniguchi_Kaxiras_etal_2018}
{\sc Y.~Cao, V.~Fatemi, A.~Demir, S.~Fang, S.~L. Tomarken, J.~Y. Luo, J.~D.
  Sanchez-Yamagishi, K.~Watanabe, T.~Taniguchi, E.~Kaxiras, R.~C. Ashoori, and
  P.~Jarillo-Herrero}, {\em Correlated insulator behaviour at half-filling in
  magic-angle graphene superlattices}, Nature, 556 (2018), p.~80–84,
  \url{https://doi.org/10.1038/nature26154}.

\bibitem{Cao_Fatemi_Fang_Watanabe_Taniguchi_Kaxiras_Jarillo-Herrero_2018}
{\sc Y.~Cao, V.~Fatemi, S.~Fang, K.~Watanabe, T.~Taniguchi, E.~Kaxiras, and
  P.~Jarillo-Herrero}, {\em Unconventional superconductivity in magic-angle
  graphene superlattices}, Nature, 556 (2018), p.~43–50,
  \url{https://doi.org/10.1038/nature26160}.

\bibitem{Carr_Fang_Kaxiras_2020}
{\sc S.~Carr, S.~Fang, and E.~Kaxiras}, {\em Electronic-structure methods for
  twisted moiré layers}, Nature Reviews Materials, 5 (2020), p.~748–763,
  \url{https://doi.org/10.1038/s41578-020-0214-0}.

\bibitem{PhysRevResearch.1.013001}
{\sc S.~Carr, S.~Fang, Z.~Zhu, and E.~Kaxiras}, {\em Exact continuum model for
  low-energy electronic states of twisted bilayer graphene}, Phys. Rev.
  Research, 1 (2019), p.~13001,
  \url{https://doi.org/10.1103/PhysRevResearch.1.013001},
  \url{https://link.aps.org/doi/10.1103/PhysRevResearch.1.013001}.

\bibitem{relaxfosdick22}
{\sc P.~Cazeaux, D.~Clark, R.~Engelke, P.~Kim, and M.~Luskin}, {\em Relaxation
  and domain wall structure of bilayer moir\'e systems}, Journal of Elasticity,
   (2023), \url{https://doi.org/10.1007/s10659-023-10013-0}.

\bibitem{Cazeaux_Luskin_Massatt_2020}
{\sc P.~Cazeaux, M.~Luskin, and D.~Massatt}, {\em Energy minimization of two
  dimensional incommensurate heterostructures}, Archive for Rational Mechanics
  and Analysis, 235 (2020), p.~1289–1325,
  \url{https://doi.org/10.1007/s00205-019-01444-y}.

\bibitem{Chen_Lucas_Yin_2023}
{\sc C.-F. Chen, A.~Lucas, and C.~Yin}, {\em Speed limits and locality in
  many-body quantum dynamics},  (2023),
  \url{https://doi.org/10.48550/arXiv.2303.07386}.

\bibitem{Chen_Ortner_2016}
{\sc H.~Chen and C.~Ortner}, {\em {QM/MM} methods for crystalline defects. part
  1: Locality of the tight binding model}, Multiscale Modeling \& Simulation,
  14 (2016), p.~232–264, \url{https://doi.org/10.1137/15M1022628}.

\bibitem{Colbrook_semigroup}
{\sc M.~J. Colbrook}, {\em Computing semigroups with error control}, SIAM
  Journal on Numerical Analysis, 60 (2022), pp.~396--422,
  \url{https://doi.org/10.1137/21M1398616}.

\bibitem{Colbrook_Horning_Thicke_Watson_2021}
{\sc M.~J. Colbrook, A.~Horning, K.~Thicke, and A.~B. Watson}, {\em {Computing
  spectral properties of topological insulators without artificial truncation
  or supercell approximation}}, IMA Journal of Applied Mathematics, 88 (2023),
  pp.~1--42, \url{https://doi.org/10.1093/imamat/hxad002}.

\bibitem{1973CombesThomas}
{\sc J.~M. Combes and L.~Thomas}, {\em Asymptotic behaviour of eigenfunctions
  for multiparticle {S}chrödinger operators}, Communications in Mathematical
  Physics, 34 (1973), pp.~251--270, \url{https://doi.org/10.1007/bf01646473}.

\bibitem{E_Lu_2011}
{\sc W.~E and J.~Lu}, {\em The electronic structure of smoothly deformed
  crystals: Wannier functions and the {Cauchy–Born} rule}, Archive for
  Rational Mechanics and Analysis, 199 (2011), p.~407–433,
  \url{https://doi.org/10.1007/s00205-010-0339-1}.

\bibitem{PhysRevB.93.235153}
{\sc S.~Fang and E.~Kaxiras}, {\em Electronic structure theory of weakly
  interacting bilayers}, Phys. Rev. B, 93 (2016), p.~235153,
  \url{https://doi.org/10.1103/PhysRevB.93.235153}.

\bibitem{faulstich2022interacting}
{\sc F.~M. Faulstich, K.~D. Stubbs, Q.~Zhu, T.~Soejima, R.~Dilip, H.~Zhai,
  R.~Kim, M.~P. Zaletel, G.~K.-L. Chan, and L.~Lin}, {\em Interacting models
  for twisted bilayer graphene: a quantum chemistry approach}, 2022,
  \url{https://arxiv.org/abs/2211.09243}.

\bibitem{2017FeffermanLeeThorpWeinstein}
{\sc C.~Fefferman, J.~P. Lee-Thorp, and M.~I. Weinstein}, {\em Honeycomb
  \{S\}chroedinger operators in the strong binding regime}, Comm. Pure Appl.
  Math., 71 (2018), pp.~1178--1270.

\bibitem{Hastings_2012}
{\sc M.~B. Hastings}, {\em Locality in quantum systems}, Oxford University
  Press, May 2012, p.~0,
  \url{https://doi.org/10.1093/acprof:oso/9780199652495.003.0003}.

\bibitem{1988HelfferSjostrand}
{\sc B.~Helffer and J.~Sjöstrand}, {\em Analyse semi-classique pour
  l'équation de \{H\}arper (avec application à l'équation de
  \{S\}chrödinger avec champ magnétique}, Mémoires de la Société
  Mathématique de France, 34 (1988), pp.~1--113.

\bibitem{Kaxiras_Joannopoulos_2019}
{\sc E.~Kaxiras and J.~D. Joannopoulos}, {\em Quantum Theory of Materials},
  Cambridge University Press, 1~ed., Apr 2019,
  \url{https://doi.org/10.1017/9781139030809}.

\bibitem{LiebRobinson1972}
{\sc E.~H. Lieb and D.~W. Robinson}, {\em The finite group velocity of quantum
  spin systems}, Communications in Mathematical Physics, 28 (1972),
  pp.~251--257, \url{https://doi.org/10.1007/BF01645779}.

\bibitem{Lin_Lu_2019}
{\sc L.~Lin and J.~Lu}, {\em A Mathematical Introduction to Electronic
  Structure Theory}, SIAM Spotlights, Society for Industrial and Applied
  Mathematics, Jan 2019, \url{https://doi.org/10.1137/1.9781611975802}.

\bibitem{Massatt_Carr_Luskin_2021}
{\sc D.~Massatt, S.~Carr, and M.~Luskin}, {\em Electronic observables for
  relaxed bilayer {2D} heterostructures in momentum space}, arXiv:2109.15296
  [cs, math],  (2021), \url{http://arxiv.org/abs/2109.15296}.

\bibitem{momentumspace17}
{\sc D.~Massatt, S.~Carr, M.~Luskin, and C.~Ortner}, {\em Incommensurate
  heterostructures in momentum space}, Multiscale Model. Simul., 16 (2018),
  pp.~429--451.

\bibitem{Massatt_Luskin_Ortner_2017}
{\sc D.~Massatt, M.~Luskin, and C.~Ortner}, {\em Electronic density of states
  for incommensurate layers}, Multiscale Modeling \& Simulation, 15 (2017),
  p.~476–499, \url{https://doi.org/10.1137/16M1088363}.

\bibitem{2009Castro-NetoGuineaPeresNovoselovGeim}
{\sc A.~H.~C. Neto, F.~Guinea, N.~M.~R. Peres, K.~S. Novoselov, and A.~K.
  Geim}, {\em The electronic properties of graphene}, Reviews of modern
  physics, 81 (2009), pp.~109--162.

\bibitem{VafekKang2023}
{\sc O.~Vafek and J.~Kang}, {\em Continuum effective hamiltonian for graphene
  bilayers for an arbitrary smooth lattice deformation from microscopic
  theories}, Phys. Rev. B, 107 (2023), p.~075123,
  \url{https://doi.org/10.1103/PhysRevB.107.075123},
  \url{https://link.aps.org/doi/10.1103/PhysRevB.107.075123}.

\bibitem{Wang_Chen_Zhou_Zhou_2021}
{\sc T.~Wang, H.~Chen, A.~Zhou, and Y.~Zhou}, {\em Layer-splitting methods for
  time-dependent schr\"{o}dinger equations of incommensurate systems},  (2021),
  \url{https://doi.org/10.48550/arXiv.2103.14897}.

\bibitem{Wang2023}
{\sc T.~Wang, H.~Chen, A.~Zhou, Y.~Zhou, and D.~Massatt}, {\em Convergence of
  the planewave approximations for quantum incommensurate systems}, 2023,
  \url{https://arxiv.org/abs/2204.00994}.

\bibitem{Watson_Kong_MacDonald_Luskin_2023}
{\sc A.~B. Watson, T.~Kong, A.~H. MacDonald, and M.~Luskin}, {\em
  Bistritzer–{MacDonald} dynamics in twisted bilayer graphene}, Journal of
  Mathematical Physics, 64 (2023), p.~031502,
  \url{https://doi.org/10.1063/5.0115771}.

\bibitem{Xie_MacDonald_2021}
{\sc M.~Xie and A.~MacDonald}, {\em Weak-field hall resistivity and spin-valley
  flavor symmetry breaking in magic-angle twisted bilayer graphene}, Physical
  Review Letters, 127 (2021), p.~196401,
  \url{https://doi.org/10.1103/PhysRevLett.127.196401}.

\bibitem{Yoo_Engelke_Carr_Fang_Zhang_Cazeaux_Sung_Hovden_Tsen_Taniguchi_etal_2019}
{\sc H.~Yoo, R.~Engelke, S.~Carr, S.~Fang, K.~Zhang, P.~Cazeaux, S.~H. Sung,
  R.~Hovden, A.~W. Tsen, T.~Taniguchi, K.~Watanabe, G.-C. Yi, M.~Kim,
  M.~Luskin, E.~B. Tadmor, E.~Kaxiras, and P.~Kim}, {\em Atomic and electronic
  reconstruction at the van der {W}aals interface in twisted bilayer graphene},
  Nature Materials, 18 (2019), p.~448–453,
  \url{https://doi.org/10.1038/s41563-019-0346-z}.

\bibitem{yosida2012functional}
{\sc K.~Yosida}, {\em Functional Analysis}, Classics in Mathematics, Springer
  Berlin Heidelberg, 2012,
  \url{https://books.google.com/books?id=yj4mBQAAQBAJ}.

\bibitem{ZHOU201999}
{\sc Y.~Zhou, H.~Chen, and A.~Zhou}, {\em Plane wave methods for quantum
  eigenvalue problems of incommensurate systems}, Journal of Computational
  Physics, 384 (2019), pp.~99--113,
  \url{https://doi.org/https://doi.org/10.1016/j.jcp.2019.02.003}.

\end{thebibliography}

\newpage
\appendix

\section{Proof of \cref{lem:bounded}}
\label{prf:h_bound}
We bound the operator norm of $H$ using interpolation. We first claim that $H$ is a bounded operator from $\ell^\infty(\Omega)$ to $\ell^\infty(\Omega)$, and the operator norm $\|H\|_\infty$ can be computed directly by 
\begin{equation}
\|H\|_{\infty} = \sup_{\vec R'_j\sigma \in \Omega} \sum_{\vec R_i\sigma \in \Omega} \left|H_{\vec R_i\sigma, \vec R'_j\sigma'} \right|.
\end{equation}
Fix any $\vec R'_j\sigma' \in\Omega$, we have the estimate
    \begin{equation}
    \label{eq:bound1}
    \begin{split}
    \sum_{\vec R_i\sigma \in \Omega} \left|H_{\vec R_i\sigma, \vec R'_j\sigma'} \right| & \leq\sum_{\vec R_i\sigma \in \Omega} h_0 e^{-\alpha_0 \left| \vec R_i + \vec \tau_i^\sigma - \vec R'_j - \vec \tau_j^{\sigma'}\right|} \\
    & = h_0 \sum_{i \in \{1, 2\}}\sum_{\sigma \in \{A, B\}}\sum_{\vec R_i\in \mathcal{R}_i} e^{-\alpha_0 \left| \vec R_i + \vec \tau_i^\sigma - \vec R'_j - \vec \tau_j^{\sigma'}\right|}     
    \end{split}
    \end{equation}

For fixed $i,\sigma$, we can rewrite the last summation of \cref{eq:bound1} as a summation over lattice points
\begin{equation}
	\sum_{\vec R_i\in \mathcal{R}_i} e^{-\alpha_0 \left| \vec R_i - \vec \tau \right|},\quad \vec \tau:=  \vec R'_j + \vec \tau_j^{\sigma'} - \vec \tau_i^\sigma,
\end{equation}
and we bound the summation using numerical integration techniques. We fix $\Gamma$ as the Wigner-Seitz (hexagonal) unit cell with $\vec{R} = 0$ at its center (see \cref{fig:lattice_sum}), so that
we can cover $\mathbb{R}^2$ with unit cells of the form $\Gamma_{\vec R} := \vec R + \Gamma$, $\vec R \in \mathcal{R}_i$. Within each unit cell we have the inequality 
\begin{equation}
	\left|\Gamma_{\vec R}\right| \min_{\vec x \in\Gamma_{\vec{R}}} e^{-\alpha_0 \left| \vec x - \vec \tau \right|} \leq \int_{\Gamma_{\vec R}} e^{-\alpha_0 \left| \vec x- \vec \tau \right|} \dee \vec x.
\end{equation}
The minimum are achieved on the boundary of $\Gamma_{\vec R}$ by convexity. 
Suppose the minimum is achieved at $\vec x = \vec Q$, from triangular inequality we have
\begin{equation}
\begin{split}
	e^{-\alpha_0 \left| \vec R - \vec \tau \right|} \leq e^{-\alpha_0 \left( \left| \vec Q - \vec \tau \right| - \left| \vec R - \vec Q \right| \right)} \leq \frac{e^{\alpha_0 |\vec R - \vec Q |}}{\left|\Gamma_{\vec R}\right|} \int_{\Gamma_{\vec R}} e^{-\alpha_0 \left| \vec x - \vec \tau \right|} \dee \vec x,
\end{split}
		\end{equation}
$|\Gamma_{\vec{R}}| = |\Gamma| = \sqrt{3}a^2/2$. We also have 
\begin{equation}
	\sup_{\vec Q \in \partial \Gamma_{\vec{R}}} | \vec Q - \vec R| \leq \frac{a}{\sqrt{3}} = \delta,
\end{equation}
so we can bound the summation over lattices using integration over $\mathbb R^2$,
\begin{equation} \label{eq:lattice_sum_bound}
		 \sum_{\vec R_i\in \mathcal{R}_i} e^{-\alpha_0 \left| \vec R_i - \vec \tau \right|} 
		\leq  \frac{ e^{\delta\alpha_0}}{|\Gamma|}\int_{\mathbb{R}^2} e^{-\alpha_0 \left| \vec x - \vec \tau\right|} \dee \vec x.
\end{equation}
 
\begin{figure}[h]\label{fig:lattice_sum}
\centering
\includegraphics[width=.5\textwidth]{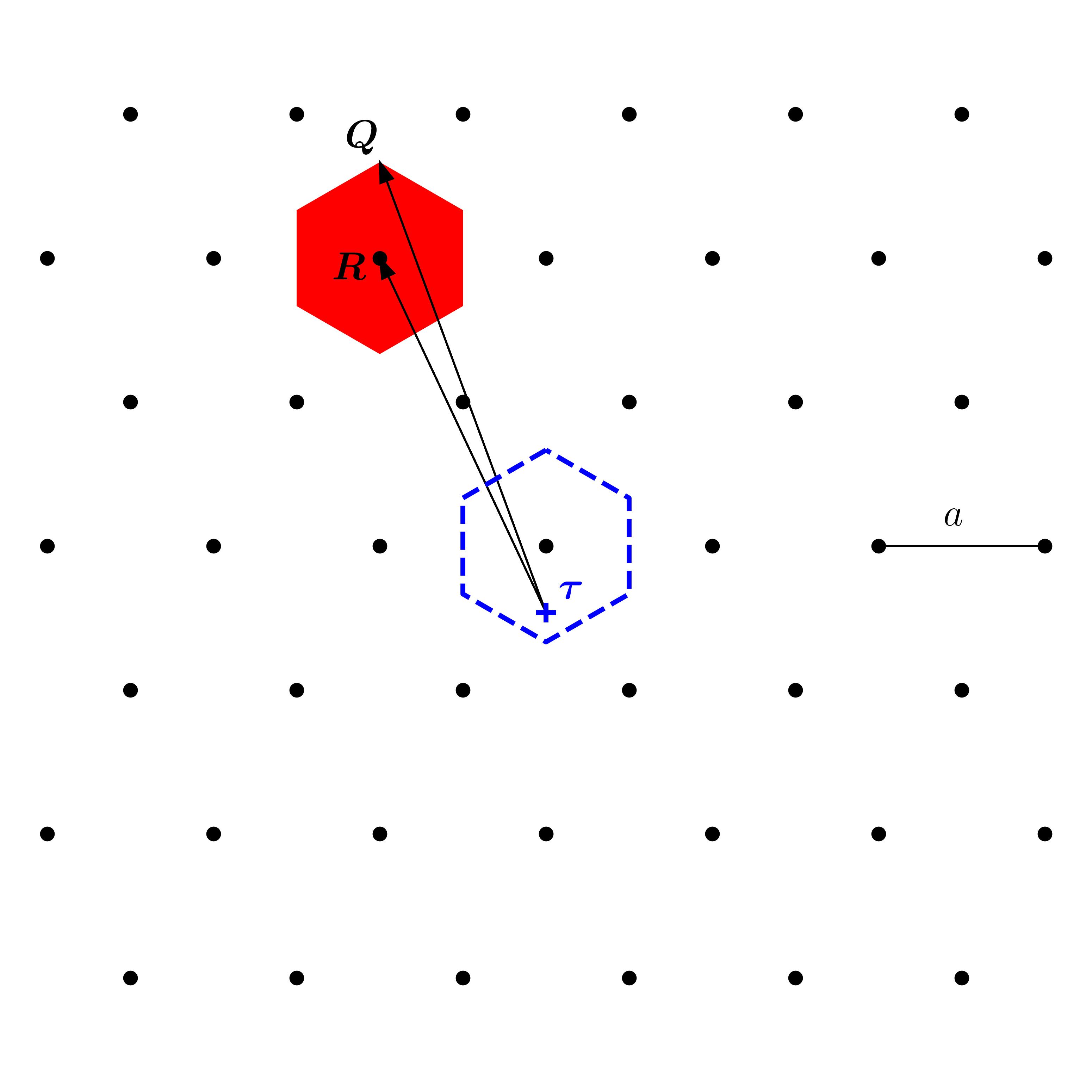}
\caption{The Bravais lattice (black) and a Wigner-Seitz unit cell $\Gamma_{\vec{R}}$ of $\vec R$ (red). After a shift of the lattice we can always ensure $\vec \tau $ is in a unit cell that contains the origin (blue).}
\end{figure}

Notice that we can use a change of variables to eliminate $\vec \tau$, so we have a uniform bound that is independent of the layer and sublattice indices. We can sum over two layer and two sublattices to get an upper bound
\begin{equation}
	\| H \|_\infty \leq \frac{4 h_0 e^{\delta\alpha_0} }{|\Gamma|} \int_{\mathbb{R}^2}e^{-\alpha_0 |\vec x|} \dee \vec x = \frac{8\pi h_0 e^{\delta\alpha_0}  }{|\Gamma| \alpha_0^2}.
\end{equation}

Since $H$ is self-adjoint, $H$ is also a bounded operator from $\ell^1(\Omega)$ to $\ell^1(\Omega)$ with the same bound on the operator norm.
\begin{equation}
\|H\|_1 = \sup_{\vec R_i\sigma \in \Omega} \sum_{\vec R'_j\sigma' \in \Omega} \left|H_{\vec R_i\sigma, \vec R'_j\sigma'} \right| = \sup_{\vec R'_j\sigma \in \Omega} \sum_{\vec R_i\sigma \in \Omega} \left|H_{\vec R_i\sigma, \vec R'_j\sigma'} \right| = \|H\|_\infty.
\end{equation}

The operator norm on $\mathcal{H}$ can be bounded using Riesz-Thorin theorem 
\begin{equation}
\|H\|_2 \leq \|H\|_\infty^{\frac{1}{2}} \|H\|_1^{\frac{1}{2}} \leq \frac{8\pi h_0 e^{\delta\alpha_0 } }{|\Gamma| \alpha_0^2}.
\end{equation}

\
\section{Proof of \cref{thm:combes-thomas}}
\label{sec:appendix_combes_thomas}
For ease of notation we use $x:= \vec R_i\sigma$, $y:=\vec R'_j\sigma'$ to represent the indices of TBG orbitals, and $\vec x := \vec R_i + \vec \tau_i^\sigma$, $\vec y := \vec R'_j + \vec \tau_j^{\sigma'}$ to represent their respective physical locations. Notice we are able to simplify this notation because the proof does not rely on the specific layer and sublattice structures of TBG.

Fix index $k \in \Omega$ with physical location $\vec k$, we define a bounded linear operator $B_\alpha$ for any $\alpha \geq 0$ on $\mathcal{H}$
\begin{equation}
\left(B_\alpha\right)_{x y} := 
\begin{cases}
    e^{\alpha |\vec x - \vec k|},  \; \text{if } x = y \\
    0, \; \text{otherwise}.
\end{cases}
\end{equation}

The entries of the operator $B_\alpha HB_\alpha^{-1}-H$ are, explicitly,
\begin{equation}
    \begin{split}
        \left(B_\alpha HB_\alpha^{-1} - H\right)_{xy} &= e^{\alpha |\vec x - \vec k|} H_{xy} e^{-\alpha |\vec y - \vec k|} -H_{xy}  \\
        & = H_{xy} \left[e^{\alpha ( |\vec x - \vec k|-|\vec y - \vec k|)} - 1\right]. \\
    \end{split}
\end{equation}
Using similar arguments as \cref{prf:h_bound}, we can bound the operator norm through a summation over the lattice, and then bound that by an integral. First, note that
\begin{equation}
    \begin{split}
        \left \| B_\alpha HB_\alpha^{-1}-H \right \|_\infty & \leq \sup_{x\in \Omega} \sum_{y\in \Omega} |H_{xy}|(e^{\alpha|\vec x - \vec y|}-1)\\
     & \leq h_0 \sup_{x \in \Omega} \sum_{y\in \Omega} e^{-\alpha_0|\vec x - \vec y|}\left( e^{\alpha|\vec x - \vec y|} - 1 \right).
     \end{split}
     \end{equation}
It is straightforward to bound the same operator in the $\ell^1$ norm by the same quantity. So, applying Riesz-Thorin, we have
\begin{equation}\label{eq:BHB}
        \left \| B_\alpha HB_\alpha^{-1}-H \right \| \leq h_0 \sup_{x \in \Omega} \sum_{y\in \Omega} e^{-\alpha_0|\vec x - \vec y|}\left( e^{\alpha|\vec x - \vec y|} - 1 \right).
     \end{equation}

 We then bound the summation for any fixed $i, \sigma$. Denote $\tau := \vec R'_j + \vec \tau_j^{\sigma'} - \tau_i^\sigma$,  and let $\Gamma_{\vec R}$ be the Wigner-Seitz unit cell associated with $\vec R$ as in \cref{prf:h_bound}, and $\vec Q_{\vec R} \in \Gamma_{\vec R}$, we have the estimates using triangular inequality
\begin{equation}
	e^{-\alpha_0|\vec R - \vec \tau|} \leq e^{\delta\alpha_0 }e^{-\alpha_0|\vec Q_{\vec R} - \vec \tau|}, \quad e^{\alpha|\vec R - \vec \tau|} - 1 \leq  e^{\delta\alpha}e^{\alpha |\vec Q_{\vec R} - \vec \tau|}	- 1.
\end{equation}
Summing over the lattice $\mathcal{R}_i$, and choosing $\vec Q_{\vec R_i}$ such that \begin{equation}
e^{-\alpha_0|\vec Q_{\vec R_i} - \vec \tau|}\left( e^{\delta\alpha}e^{\alpha |\vec Q_{\vec R_i} - \vec \tau|}	- 1 \right) \end{equation}
is minimized in each unit cell, we then have
\begin{equation}
	 \sum_{\vec R_i\in \mathcal{R}_i} e^{-\alpha_0 \left| \vec R_i - \vec \tau \right|}\left( e^{\alpha|\vec R_i - \vec \tau|} - 1 \right)
		\leq  \frac{ e^{\delta\alpha_0}}{|\Gamma|}\int_{\mathbb{R}^2} e^{-\alpha_0 \left| \vec x - \vec \tau\right|} \left(e^{\alpha \delta}e^{\alpha|\vec x - \vec \tau|} - 1 \right) \dee \vec x.
\end{equation}

Multiplying by number of layers and sublattices, and evaluating the integral, we conclude that the integral converges only when $\alpha < \alpha_0$, and \cref{eq:BHB} is bounded by
\begin{equation} \label{eq:BHB_bound}
\frac{8\pi h_0 e^{\delta\alpha_0}}{|\Gamma|}\left[\frac{e^{\delta\alpha}}{(\alpha_0 - \alpha)^2} - \frac{1}{\alpha_0^2} \right].
\end{equation}

It is clear that \cref{eq:BHB_bound} is an increasing function of $\alpha$ for $0 \leq \alpha < \alpha_0$ which equals $0$ at $\alpha = 0$ and $\rightarrow \infty$ as $\alpha \rightarrow \alpha_0$. Thus, for any $\nu \in (0,1)$, we can define $\alpha_{\max}$ such that 
\begin{equation}
	\frac{8\pi h_0 e^{\delta\alpha_0}}{|\Gamma|}\left[\frac{e^{\delta\alpha_{\max}}}{(\alpha_0 - \alpha_{\max})^2} - \frac{1}{\alpha_0^2} \right] = (1-\nu) d.
\end{equation}
We then have that $\left\|B_\alpha H B_\alpha^{-1}-H\right\| \leq (1-\nu)d$ for all $\alpha \leq \alpha_{\max}$. 

Now, notice that 
\begin{equation}
    \begin{split}
        B_\alpha(z-H)^{-1}B_\alpha^{-1} & = \left(z - B_\alpha HB_\alpha^{-1}\right)^{-1} \\
        & = \left(z - H + H - B_\alpha HB_\alpha^{-1}\right)^{-1} \\
        & = (z-H)^{-1} \left[I - (B_\alpha HB_\alpha^{-1}-H)(z-H)^{-1}\right]^{-1}.
    \end{split}
\end{equation}
The assumption $\operatorname{dist}(z, \sigma(H))\geq d$ gives $\left\|(z-H)^{-1} \right \|\leq d^{-1}$. Our choice of $\alpha$ ensures that the operator $z-B_\alpha HB_\alpha^{-1}$ is invertible and 
\begin{equation}
   \left\|B_\alpha(z-H)^{-1}B_\alpha^{-1} \right\| \leq \frac{1}{d} \left( 1 - (1-\nu)d\cdot\frac{1}{d} \right)^{-1} = \frac{1}{\nu d}.
\end{equation}
Moreover, 
\begin{equation}
    \begin{split}
        \left|[B_\alpha(z-H)^{-1}B_\alpha^{-1}]_{xy}\right| &= \left|[(z-H)^{-1}_{xy} e^{\alpha (|\vec x - \vec k|-|\vec k - \vec y|)}\right|\\
        &\leq \left \| B_\alpha(z-H)^{-1}B_\alpha^{-1} \right \| \leq\frac{1}{\nu d},
    \end{split}
\end{equation} 
which gives
\begin{equation}
    \left|[(z-H)^{-1}_{xy}\right|\leq \frac{1}{\nu d}e^{-\alpha(|\vec x - \vec k|-| \vec k - \vec y|)},
\end{equation}
Setting $y = k, \alpha=\alpha_{\max}$, we recover our estimate on the resolvent.

\section{Proof of \cref{thm:trunc_main}} \label{sec:proof_main}
We can estimate the truncation error by writing out the solutions explicitly
\begin{equation}
\begin{split} \label{eq:total-err}
\|&\zeta(t)\|_\mathcal{H} \\
&=\left\| e^{-iHt}\psi_0 - P_{R}^* e^{-iH_Rt} P_R \mathcal{X}_{B_r}\psi_0 \right\|_{\mathcal H} \\
& =\left\| e^{-iHt}\mathcal{X}_{B_r}\psi_0 + e^{-iHt} \mathcal{X}_{B_r^\complement} \psi_0  - P_{R}^* e^{-iH_Rt} P_R \mathcal{X}_{B_r}\psi_0 \right\|_{\mathcal H} \\
& \leq \left\| \left(e^{-iHt} - P_{R}^* e^{-iH_Rt} P_R \right) P_R\mathcal{X}_{B_r}\psi_0 \right\|_{\mathcal H}  +  \left\| e^{-iHt}\mathcal{X}_{B_r^\complement}\psi_0\right\|_{\mathcal H} \\
 &\leq \underbrace{ \left\|P_R^* \left(P_R e^{-iHt} - e^{-iH_Rt} P_R \right) \mathcal{X}_{B_r}\psi_0 \right\|_{\mathcal H}}_{=:I_1} + 
\underbrace{\left\|\left(I - P_R^*P_R\right) e^{-iHt}\mathcal{X}_{B_r}\psi_0 \right\|_{\mathcal H}}_{=:I_2} + \phi(r).
\end{split}
\end{equation}

Here the last term comes from \cref{as:intial_condition} and the fact that $\exp(-iHt)$ is an isometry. Notice $I_1$ represents the error caused by the truncation of the Hamiltonian, and $I_2$ represents the error of wave functions exiting the truncated domain.
In \cref{sec:estimate1} we have
\begin{equation}
    I_1 \leq \sqrt{\frac{2}{\pi}} C_\gamma \frac{h_0 e^{\delta\alpha_{\max}}}{\nu^2 d^2}\frac{|\Omega_R| |\Omega_r|^{\frac{1}{2}} }{|\Gamma|^{\frac{1}{2}}} C_2^{\frac{1}{2}}(2\alpha_{\max}, R) e^{dt-\alpha_{\max}(R-r)}\left\| \mathcal{X}_{B_r}\psi_0 \right\|_{\mathcal H}, 
\end{equation}
and in \cref{sec:estimate2} we have
\begin{equation}
    I_2 \leq \sqrt{\frac{2}{\pi}} C_\gamma \frac{e^{\delta\alpha_{\max}}}{\nu d} \frac{|\Omega_r|^{\frac{1}{2}}}{|\Gamma|^{\frac{1}{2}}}C_2^{\frac{1}{2}}(2\alpha_{\max}, R) e^{dt-\alpha_{\max}(R-r)} \left\| \mathcal{X}_{B_r}\psi_0  \right\|_{\mathcal H},
\end{equation}
where the constant $C_2$ depends only on $R$ and $\alpha_{\max}$
\begin{equation}\label{eq:C2}
	C_2(\alpha_{\max}, R) := \frac{1+R\alpha_{\max} - \delta\alpha_{\max}}{\alpha_{\max}^2}.
\end{equation} 
The estimates on $\zeta(t)$ follows immediately by summing the terms.

\subsection{Estimation on $I_1$}
\label{sec:estimate1}
We provide two bounds related to the index set $\Omega$ and the truncated index set $\Omega_r$ for TBG, that will be useful in our analysis.

\begin{lemma}
\label{lem:count} 
For $R > r \geq \delta $ and $\alpha_{\max}>0$, we have 
\begin{equation} \label{eq:trunc_decay}
    \sum_{x\in \Omega \setminus \Omega_R} \sum_{y\in \Omega_r } e^{- \alpha_{\max}  |\vec x - \vec y|} \leq  \frac{8\pi e^{2\delta\alpha_{\max}} |\Omega_r|}{|\Gamma|} C_2(R, \alpha_{\max}) e^{-\alpha_{\max}(R-r)},
\end{equation}
where $C_2$ is defined in \cref{eq:C2}. $|\Omega_r|$ is the number of orbitals in $\Omega_r$, which can be estimated by
\begin{equation} \label{eq:count}
\left|\Omega_r\right| = \sum_{x\in \Omega_r} 1 \leq 4 \cdot \left\lceil\frac{4r}{\sqrt{3}a} \right\rceil^2.
\end{equation}
\end{lemma}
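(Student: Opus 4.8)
The plan is to prove the two displayed bounds separately. The estimate \eqref{eq:trunc_decay} is the substantive one; its proof reuses the Wigner--Seitz cell comparison developed in \cref{prf:h_bound}, while the counting bound \eqref{eq:count} is an elementary packing argument.

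\emph{Step 1 (decoupling the double sum).} The first move is to extract the factor $e^{-\alpha_{\max}(R-r)}$ so that the two summations separate. If $x \in \Omega \setminus \Omega_R$ then $|\vec x| > R$, and if $y \in \Omega_r$ then $|\vec y| \le r$; hence by the triangle inequality
\begin{equation}
|\vec x - \vec y| \;\ge\; |\vec x| - |\vec y| \;\ge\; \bigl(|\vec x| - R\bigr) + (R - r),
\end{equation}
where both parenthesised quantities are nonnegative because $R > r$. Consequently $e^{-\alpha_{\max}|\vec x - \vec y|} \le e^{-\alpha_{\max}(R-r)} e^{-\alpha_{\max}(|\vec x| - R)}$, and summing over the $|\Omega_r|$ indices $y$ reduces \eqref{eq:trunc_decay} to controlling the single lattice sum $\sum_{x \in \Omega \setminus \Omega_R} e^{-\alpha_{\max}(|\vec x| - R)}$.

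\emph{Step 2 (comparison with an integral).} Splitting this sum over the four layer--sublattice pairs $(i,\sigma)$ and writing $\vec x = \vec R_i + \vec \tau_i^\sigma$, I attach to each orbital its Wigner--Seitz cell $\Gamma_{\vec x}$, of area $|\Gamma|$ and circumradius $\delta$, exactly as in \cref{prf:h_bound}. For $\vec x' \in \Gamma_{\vec x}$ one has $|\vec x| \ge |\vec x'| - \delta$, so $e^{-\alpha_{\max}(|\vec x| - R)} \le \frac{e^{\delta \alpha_{\max}}}{|\Gamma|} \int_{\Gamma_{\vec x}} e^{-\alpha_{\max}(|\vec x'| - R)} \dee \vec x'$. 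The cells associated to orbitals with $|\vec x| > R$ are pairwise disjoint and contained in $\{ |\vec x'| > R - \delta \}$, so the sum is dominated by $\frac{e^{\delta \alpha_{\max}}}{|\Gamma|} \int_{|\vec x'| > R - \delta} e^{-\alpha_{\max}(|\vec x'| - R)} \dee \vec x'$. Passing to polar coordinates and substituting $u = |\vec x'| - R$ turns this into an elementary one-dimensional integral whose value is $2\pi e^{\delta \alpha_{\max}} C_2(\alpha_{\max}, R)$; the shifted lower limit $u = -\delta$ is exactly what generates the $-\delta\alpha_{\max}$ term in \eqref{eq:C2}. Summing over the four pairs and combining the two factors of $e^{\delta\alpha_{\max}}$ into $e^{2\delta\alpha_{\max}}$ yields \eqref{eq:trunc_decay}.

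\emph{Step 3 (counting).} For \eqref{eq:count} I again split over the four pairs $(i,\sigma)$ and bound the number of orbitals with $|\vec R_i + \vec\tau_i^\sigma| \le r$. Each such orbital carries a disjoint Wigner--Seitz cell of area $|\Gamma|$ lying inside the enlarged disk $B_{r+\delta}$, so the count per pair is at most $\pi (r+\delta)^2 / |\Gamma|$; equivalently, recovering the integer coordinates through $\vec a_{j,i} \cdot \vec b_{j,i'} = 2\pi \delta_{ii'}$ and using $|\vec b_{j,i}| = 4\pi/(\sqrt 3 a)$ confines each coordinate to an interval of length $\propto r/a$. Either route gives a bound of the stated order, and for $r \ge \delta$ the elementary simplification produces the explicit constant $4 \lceil 4r/(\sqrt 3 a)\rceil^2$ in \eqref{eq:count}.

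The step I expect to require the most care is Step 2: matching the constant $C_2$ exactly means tracking how the circumradius $\delta$ enlarges the region of integration from $\{|\vec x'| > R\}$ to $\{|\vec x'| > R - \delta\}$ --- which is what produces the $-\delta\alpha_{\max}$ correction --- while simultaneously carrying the $e^{\delta\alpha_{\max}}$ prefactor from the lattice-to-integral comparison, the two together accounting for the $e^{2\delta\alpha_{\max}}$ in \eqref{eq:trunc_decay}. By contrast, the decoupling in Step 1 and the packing count in Step 3 are routine once the separation $R > r$ is in hand.
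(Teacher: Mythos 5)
Your proof of the main estimate \eqref{eq:trunc_decay} is correct and follows essentially the same route as the paper: the triangle inequality $|\vec x - \vec y| \ge |\vec x| - r$ peels off the $y$-sum as a factor of $|\Omega_r|$, and the remaining sum over $\Omega\setminus\Omega_R$ is dominated by $\tfrac{e^{\delta\alpha_{\max}}}{|\Gamma|}\int_{|\vec x'|>R-\delta} e^{-\alpha_{\max}(|\vec x'|-R)}\,\dee\vec x'$, which evaluates to exactly $2\pi e^{\delta\alpha_{\max}} C_2(\alpha_{\max},R)$ per layer--sublattice pair; whether you extract $e^{-\alpha_{\max}(R-r)}$ at the start (as you do) or at the end (as the paper does) is immaterial, and your bookkeeping of the two $e^{\delta\alpha_{\max}}$ factors is right.

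The one soft spot is Step 3. Neither of your two routes actually lands on the stated constant in \eqref{eq:count}: the packing bound gives a total of $4\cdot\pi(r+\delta)^2/|\Gamma| = \tfrac{8\pi}{\sqrt3}\,(r+\delta)^2/a^2$, which at $r=\delta$ is about $19.3$, exceeding $4\lceil 4r/(\sqrt3 a)\rceil^2 = 16$ there (and your simplification $(r+\delta)^2\le 4r^2$ makes it overshoot for all $r$, since $\tfrac{32\pi}{\sqrt3}\approx 58 > \tfrac{64}{3}\approx 21.3$ asymptotically); the coordinate route confines each integer coordinate to $[-4r/(\sqrt3 a),\,4r/(\sqrt3 a)]$ and hence gives roughly $4\bigl(2\lceil 4r/(\sqrt3 a)\rceil+1\bigr)^2$ rather than $4\lceil 4r/(\sqrt3 a)\rceil^2$. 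So you should not assert that "the elementary simplification produces the explicit constant" without exhibiting the computation. That said, this matches the paper's own (equally schematic) one-line justification of \eqref{eq:count}, the bound is true, and the explicit constant is never used downstream --- \eqref{eq:trunc_decay} and the truncation theorem are stated in terms of $|\Omega_r|$ itself --- so this does not affect the substance of the lemma.
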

\begin{proof} 
To find a sharp bound for the number of lattice points inside a circle of given radius is a well-known problem in number theory. The inequality \cref{eq:count} is estimated by finding the number of unit cells that covers the circle, and multiplying that number by the number of layers and sublattices.

For the \cref{eq:trunc_decay}, we have
\begin{equation}
    \sum_{x\in \Omega \setminus \Omega_R}\sum_{y\in \Omega_r} e^{- \alpha_{\max}  |\vec x - \vec y|} \leq |\Omega_r| \sum_{x\in \Omega \setminus \Omega_R} e^{-\alpha_{\max} (|\vec x| - r)}.
\end{equation}
The summation can be bounded similar to \cref{eq:lattice_sum_bound} by an integral. We here integrate over a larger region, so that all hexagonal unit cells will be contained in the region
\begin{equation}
\begin{split}
 \sum_{x\in \Omega \setminus \Omega_R} e^{-\alpha_{\max} (|\vec x| - r)} 
 &\leq \frac{4 e^{\delta\alpha_{\max}}}{|\Gamma|} \int_{\mathbb{R}^2\setminus B_{R-\delta}} e^{-\alpha_{\max}(|\vec x|-r)} \dee \vec x \\
     &= \frac{8\pi e^{\delta\alpha_{\max}}}{|\Gamma|}\frac{1+R\alpha_{\max} - \delta\alpha_{\max}}{\alpha_{\max}^2}e^{-\alpha_{\max}(R-r)}e^{\alpha_{\max} \delta} \\
     & = \frac{8\pi e^{2\delta\alpha_{\max}} }{|\Gamma|} C_2(\alpha_{\max}, R) e^{-\alpha_{\max}(R-r)}.
\end{split}
\end{equation}

Together with \cref{eq:count} we have the estimate. 
\end{proof}

The eigenvalues of $H_R$ are contained in the spectrum of $H$. 
For the given contour $\gamma$ with $\operatorname{dist}(\gamma, \sigma(H)) > d$, we can write
\begin{equation}
\label{eq:err1_integral}
    \begin{split}
        &P_R^* \left(P_R e^{-iHt} - e^{-iH_Rt} P_R \right) \mathcal{X}_{B_r}\psi_0  \\
        = &   \frac{1}{2\pi i} \int_\gamma e^{-izt} P_R^* \left[P_R (z-H)^{-1} -(z-H_R)^{-1}P_R\right]\mathcal{X}_{B_r}\psi_0  \dee z \\
        =&\frac{1}{2\pi i}  \int_\gamma e^{-izt} P_R^* P_R(z-H)^{-1}(HP_R^*-P_R^*H_R)(z-H_R)^{-1}P_R \mathcal{X}_{B_r}\psi_0  \dee z.
    \end{split}
\end{equation}
Now, note that $HP_R^*-P_R^*H_R = (I - P_R^*P_R)HP_R^*$ from the definition of $H_R$ in \cref{eq:H_trunc}, and the entries of $(I - P_R^*P_R)H$ are explicitly
\begin{equation}
\left[(I - P_R^*P_R)H \right]_{xy} = 
    \begin{cases}
        H_{xy}, \; \text{if } x \in \Omega\setminus\Omega_R, \\
        0, \; \text{otherwise.}
    \end{cases}
\end{equation}

Then we can give a bound on $I_1$ by using a contour integral
\begin{equation}
    \begin{split}
        I_1 
        &= \left\|\frac{1}{2\pi i}  \int_\gamma e^{-izt} P_R^* P_R(z-H)^{-1} (I - P_R^*P_R)HP_R^*(z-H_R)^{-1}P_R \mathcal{X}_{B_r}\psi_0  \dee z \right\|_{\mathcal{H}} \\ 
        &\leq \frac{C_\gamma e^{dt}}{2\pi} \left \| P_R^* P_R(z-H)^{-1} (I - P_R^*P_R)HP_R^*(z-H_R)^{-1}P_R \mathcal{X}_{B_r}\psi_0  \right \|_{\mathcal H}
    \end{split}
\end{equation}
where $C_\gamma$ is the finite length of contour $\gamma$.

Let $u,v,x,y \in \Omega$ be the indices, and $\vec u, \vec v, \vec x, \vec y$ be the respective physical positions, the injection operators $P_R$ and $P_R^*$ allows us to write out the square of norm explicitly through summations over indices. 
\begin{equation}
\label{eq:err1_sum}
\begin{split}
&\left \| P_R^* \left[P_R(z-H)^{-1} (I - P_R^*P_R)HP_R^*(z-H_R)^{-1}P_R\right] \mathcal{X}_{B_r}\psi_0  \right \|_{\mathcal H}^2 \\
    & \leq \sum_{x\in \Omega_R}\sum_{u\in \Omega\setminus\Omega_R}\sum_{v\in \Omega_R} \sum_{y\in \Omega_r} \left|(z-H)^{-1}_{xu}\right|^{2} \left|H_{uv}\right|^{2}\left|(z-H_R)^{-1}_{vy}\right|^2 \left|\psi_0(y) \right|^{2}\\
    & \leq \left\|\mathcal{X}_{B_r}\psi_0 \right\|^2_{\mathcal H} \sum_{x\in \Omega_R}\sum_{u\in \Omega\setminus\Omega_R}\sum_{v\in \Omega_R} \sum_{y\in \Omega_r} \frac{1}{\nu^4d^4} e^{-2\alpha_{\max}|\vec x - \vec u|} h_0^2 e^{-2\alpha_0 |\vec u - \vec v|} e^{-2\alpha_{\max}|\vec v - \vec y|} \\
    & \leq \frac{h_0^2}{\nu^4d^4}\left\|\mathcal{X}_{B_r}\psi_0 \right\|^2_{\mathcal H} \sum_{x\in \Omega_R} \sum_{v\in\Omega_R} \sum_{u\in \Omega\setminus\Omega_R}  \sum_{y\in\Omega_r} e^{-2\alpha_{\max}(|\vec u - \vec v| + |\vec v - \vec y|)}\\
    & \leq \frac{h_0^2}{\nu^4d^4}\frac{8\pi e^{4\delta\alpha_{\max}} |\Omega_R|^2 |\Omega_r|}{|\Gamma| } C_2(2\alpha_{\max}, R) e^{-2\alpha_{\max}(R-r)}\left\|\mathcal{X}_{B_r}\psi_0 \right\|^2_{\mathcal H}.
    \end{split}
\end{equation}
Here we use \cref{thm:combes-thomas} to bound the entries of the resolvent, and \cref{lem:count} to bound the infinite summation. We then take the square root to get the desired result.

\subsection{Estimation on $I_2$}
\label{sec:estimate2}
Similar to the previous estimate, we use contour integral to estimate
\begin{equation}\begin{split}
I_2 = \left\| \left(I - P_R^*P_R\right) e^{-iHt} \mathcal{X}_{B_r}\psi_0  \right\|_{\mathcal H} &=\left\|\frac{1}{2\pi i}\int_\gamma  e^{-izt} \left(I - P_R^*P_R\right)  (z-H)^{-1} \mathcal{X}_{B_r}\psi_0 \right\|_{\mathcal H} \\ 
    &\leq \frac{C_\gamma e^{dt}}{2\pi} \left \| \left(I - P_R^*P_R\right)  (z-H)^{-1} \mathcal{X}_{B_r}\psi_0  \right\|_{\mathcal H}.
\end{split}
\end{equation}
\cref{thm:combes-thomas} and \cref{lem:count} gives
\begin{equation}
\begin{split}
     &\left\| \left(I - P_R^*P_R\right)  (z-H)^{-1} \mathcal{X}_{B_r}\psi_0  \right\|_{\mathcal H}^2  \\
     &= \sum_{x\in\Omega \setminus \Omega_R} \sum_{y\in\Omega_r} |(z-H)^{-1}_{xy}|^2 |\mathcal{X}_{B_r}\psi_0 (y)|^2 \\
    &\leq \sum_{x\in \Omega \setminus \Omega_R}\sum_{y\in \Omega_r} \frac{1}{\nu^2d^2} e^{-2\alpha_{\max}|\vec x - \vec y|} \left\| \mathcal{X}_{B_r}\psi_0  \right\|_{\mathcal H}^2 \\
    &\leq \frac{1}{\nu^2d^2}\frac{8\pi e^{4\delta\alpha_{\max}} |\Omega_r|}{|\Gamma|} C_2(2\alpha_{\max}, R) e^{-2\alpha_{\max} (R-r)}\left\| \mathcal{X}_{B_r}\psi_0  \right\|_{\mathcal H}^2.
\end{split}
\end{equation}
Then the bound on $I_2$ follows immediately after taking the square root.

\end{document}